\title{\bf \LARGE
Scaled Relative Graph Analysis of Lur'e Systems and the Generalized Circle Criterion
}
\author{Julius P.J. Krebbekx$^1$, Roland Tóth$^{1,2}$, Amritam Das$^1$
\thanks{$^1$Control Systems group, Department of Electrical Engineering,  Eindhoven University of Technology, The Netherlands.}
\thanks{$^2$Systems and Control Lab, HUN-REN Institute for Computer Science and Control, Budapest, Hungary. 
E-mail: {\tt\small \{j.p.j.krebbekx, R.Toth, am.das\}@tue.nl}}
}
\date{August 2024}
\begin{document}

\maketitle

\begin{abstract}
    Scaled Relative Graphs (SRGs) provide a novel graphical frequency-domain method for the analysis of nonlinear systems. However, we show that the current SRG analysis suffers from a pitfall that limit its applicability in analyzing practical nonlinear systems. We overcome this pitfall by modifying the SRG of a linear time invariant operator, combining the SRG with the Nyquist criterion, and apply our result to Lur'e systems. We thereby obtain a generalization of the celebrated circle criterion, which deals with a broader class of nonlinearities, and provides (incremental) $L_2$-gain performance bounds.
\end{abstract}


\section{Introduction}

In the case of a Linear Time Invariant (LTI) system, graphical system analysis using the Nyquist diagram~\cite{nyquistRegenerationTheory1932} is the cornerstone of control engineering. It is easy to use, and allows for intuitive analysis and controller design methods. However, it is currently unclear how to generalize such graphical frequency domain methods to nonlinear system analysis and controller design with their full potential. 

There have been various attempts in extending graphical analysis methods for nonlinear systems, but they are all either approximate, or limited in range of applicability. Classical results such as the circle criterion \cite{sandbergFrequencyDomainConditionStability1964}, which is based on the Nyquist concept, and the Popov criterion can predict the stability of a class of nonlinear systems. These methods are exact, but are not useful for performance shaping. Moreover, they are limited to Lur'e systems with sector bounded nonlinearities. The describing function method \cite{krylovIntroductionNonlinearMechanics1947} is an approximate method based on the Nyquist diagram, which considers only the first Fourier coefficient of the amplitude-dependent frequency response. A more sophisticated approximate method is the nonlinear bode diagram in~\cite{pavlovFrequencyDomainPerformance2007}. The Scaled Relative Graph (SRG) \cite{ryuScaledRelativeGraphs2022} proposed in~\cite{chaffeyGraphicalNonlinearSystem2023} is a new graphical method to analyze nonlinear feedback systems. It is an exact method and it is intuitive because of its close connection to the Nyquist diagram. Moreover, SRG analysis can provide performance bounds in terms of (incremental) $L_2$-gain. This has been demonstrated recently in \cite{vandeneijndenScaledGraphsReset2024}, where the framework of SRGs has been applied to the analysis of reset controllers. By leveraging the restriction of the SRG to specific input spaces, it is also possible to compute a nonlinear Bode diagram and define a bandwidth for nonlinear systems~\cite{krebbekxNonlinearBandwidthBode2025}.

Even though existing SRG tools are exact, they are limited in range of applicability, as they can only deal with stable open-loop plants. In practice, however, it is often required to stabilize an unstable plant. In this paper, we demonstrate a fundamental pitfall of SRG analysis when applying it to unstable LTI systems in a feedback interconnection. We resolve this pitfall by including the information provided by the Nyquist criterion into the SRG of the LTI operator to obtain an effective tool for computing stability conditions. We then apply our solution to the Lur'e system, for which we obtain performance metrics in terms of (incremental) $L_2$-gain bounds, and well-posedness is guaranteed via the homotopy construction~\cite{megretskiSystemAnalysisIntegral1997,chaffeyHomotopyTheoremIncremental2024}. An important consequence of our results is a generalized circle criterion, where the class of nonlinear operators is now general, instead of being limited to sector bounded nonlinearities as in the current sate-of-the-art results.

This paper is structured as follows. In Section~\ref{sec:preliminaries}, we present the required preliminaries, and in Section~\ref{sec:nl_system_analysis} the SRG concept of nonlinear system analysis. An important pitfall of the current SRG methods is identified in Section~\ref{sec:pitfall}, and its resolution is given in Section~\ref{sec:resolution}, which implies a rather general extension of the celebrated circle criterion. We apply our main result to a plant with an unstable pole in feedback with a nonlinearity in Section~\ref{sec:examples} and present our conclusions in Section~\ref{sec:conclusion}.





\section{Preliminaries}\label{sec:preliminaries}


\subsection{Notation and Conventions}

Let $\R, \C$ denote the real and complex number fields, respectively, with $\R_{>0} = (0, \infty)$ and $\C_{\mathrm{Re} > 0}= \{ a+ jb \mid \, a \in \R_{>0}, \, b \in \R \}$, where $j$ is the imaginary unit. We denote the complex conjugate of $z = a + jb \in \C$ as $\bar{z} = a-jb$. Let $\mathcal{L}$ denote a Hilbert space, equipped with an inner product $\inner{\cdot}{\cdot}_\mathcal{L} : \mathcal{L} \times \mathcal{L} \to \C$ and norm $\norm{x}_\mathcal{L} := \sqrt{\inner{x}{x}_\mathcal{L}}$.  For sets $A, B \subset \C$, the sum and product sets are defined as $A+B:= \{ a+b \mid a\in A, b\in B\}$ and $AB:= \{ ab \mid a\in A, b\in B\}$, respectively. The disk in the complex plane is denoted $D_r(x) = \{ z \in \C \mid |z-x| \leq r \}$. Denote $D_{[\alpha, \beta]}$ the disk in $\C$ centered on $\R$ which intersects $\R$ in $[\alpha, \beta]$. The radius of a set $\mathcal{C} \subset \C$ is defined by $\rmin(\mathcal{C}) := \inf_{r>0} : \mathcal{C} \subset D_r(0)$. The distance between two sets $\mathcal{C}_1,\mathcal{C}_2 \subset \C \cup \{ \infty \}$ is defined by $\dist(\mathcal{C}_1,\mathcal{C}_2) := \inf_{z_1 \in \mathcal{C}_1, z_2 \in \mathcal{C}_2} |z_1-z_2|$, where $|\infty-\infty|:=0$.

\subsection{Signals, Systems and Stability}



Since this work focuses on SISO continuous-time systems, the Hilbert space of particular interest is $L_2:= \{ f:\R_{\geq 0} \to \R \mid \norm{f}_2 < \infty \}$, where the norm is induced by the inner product $\inner{f}{g}:= \int_\R f(t) g(t) d t$. For any $T \in \R_{\geq 0}$, define the truncation operator $P_T$ as 
\begin{equation}
    (P_T u)(t) :=
    \begin{cases}
        u(t) & t \leq T, \\        
        0 & t > T.
    \end{cases}
\end{equation}
The extension of $L_2$, see Ref.~\cite{desoerFeedbackSystemsInputoutput1975}, is defined as 
\begin{equation*}
    \Lte := \{ u : \R_{\geq 0} \to \R \mid \norm{P_T u}_2 < \infty \text{ for all } T \in \R_{\geq 0} \}.
\end{equation*}
Note that the extension is particularly useful since it includes periodic signals, which are otherwise excluded from $L_2$. 

Systems are modeled as operators $R: \Lte \to \Lte$. A system is said to be causal if it satisfies $P_T (Ru) = P_T(R(P_Tu))$, i.e., the output at time $t$ is independent of the signal at times greater than $t$. Unless specified otherwise, \emph{we will always assume causality.}

Given an operator $R$ on $L_2$, the induced incremental norm of the operator is defined (similar to the notation in~\cite{vanderschaftL2GainPassivityTechniques2017}) as 
\begin{equation}\label{eq:incremental_induced_norm}
    \Gamma(R) := \sup_{u_1, u_2 \in L_2} \frac{\norm{Ru_1-Ru_2}_2}{\norm{u_1-u_2}_2}.
\end{equation}
Similarly, we define the induced non-incremental norm of the operator as
\begin{equation}\label{eq:non_incremental_induced_norm}
    \gamma(R) := \sup_{u \in L_2} \frac{\norm{Ru}_2}{\norm{u}_2}.
\end{equation}

For causal systems, the induced (non-)incremental operator norm on $L_2$ carries over to  $\Lte$ since $\norm{P_T(Ru)}_2 = \norm{P_T(R(P_Tu))}_2 \leq \norm{R(P_Tu)}_2$ and $P_T u \in L_2$ for all $u \in \Lte$ (see Ref.~\cite{vanderschaftL2GainPassivityTechniques2017}). We define the (non-)incremental $L_2$-gain of a causal operator $R : \Lte \to \Lte$ as $\Gamma(R)$ ($\gamma(R)$), i.e., the induced (non-)incremental operator norm from Eq.~\eqref{eq:incremental_induced_norm} (Eq.~\eqref{eq:non_incremental_induced_norm}). A causal system $R$ is said to be (non-)incrementally $L_2$-stable if $\Gamma(R) < \infty$ ($\gamma(R) < \infty$). A system $R$ is called $L_2$-stable if $\norm{u}_2 < \infty$ implies $\norm{Ru}_2 < \infty$. If $R(0)=0$ holds, then incremental $L_2$-stability implies $L_2$-stability.

\subsection{Scaled Relative Graphs}\label{sec:srg_definitions}

We now turn to the definition and properties of the Scaled Relative Graph (SRG), as introduced by Ryu et al. in~\cite{ryuScaledRelativeGraphs2022}. We follow closely the exposition of the SRG as given by Chaffey et al. in~\cite{chaffeyGraphicalNonlinearSystem2023}. 

\subsubsection{Definitions}

Let $\mathcal{L}$ be a Hilbert space, and $R : \mathcal{L} \to \mathcal{L}$ an operator. The angle between $u, y\in \mathcal{L}$ is defined as 
\begin{equation}\label{eq:def_srg_angle}
    \angle(u, y) := \cos^{-1} \frac{\mathrm{Re}\inner{u}{y}}{\norm{u} \norm{y}} \in [0, \pi].
\end{equation}
Given $u_1, u_2 \in \mathcal{U} \subset \mathcal{L}$, we define the set of complex numbers
\begin{multline*}
    z_R(u_1, u_2) := \left\{ \frac{\norm{Ru_1-Ru_2}}{\norm{u_1-u_2}} e^{\pm j \angle(u_1-u_2, Ru_1-Ru_2)} \right\}.
\end{multline*}
The SRG of $R$ over the set $\mathcal{U}$ is defined as
\begin{equation*}
    \SRG_\mathcal{U} (R) := \bigcup_{u_1, u_2 \in \mathcal{U}} z_R(u_1, u_2).
\end{equation*}
When $\mathcal{U}=\mathcal{L}$, we denote $\SRG_\mathcal{L}(R) = \SRG(R)$. Note that the SRG is a subset of $\C$.

One can also define the Scaled Graph (SG) around some particular input. The SG of an operator $R$ with one input $u^\star \in \mathcal{L}$ fixed and the other in set $\mathcal{U} \subset \mathcal{L}$ is defined as
\begin{equation}
    \SG_{\mathcal{U}, u^\star}(R) := \{ z_R(u, u^\star) \mid u \in \mathcal{U} \}.
\end{equation}
We introduce the shorthand $\SG_{\mathcal{L}, u^\star}(R) = \SG_{u^\star}(R)$.


By definition of the SRG (SG), the (non-)incremental gain of a system $R$, defined in Eq.~\eqref{eq:incremental_induced_norm} (Eq.~\eqref{eq:non_incremental_induced_norm}), is equal to the radius of the SRG (SG at zero), i.e. $\Gamma(R) = \rmin(\SRG(R))$ ($\gamma(R) = \rmin(\SG_0(R))$).

\subsubsection{Operations on SRGs}\label{sec:operations_on_srgs}

The facts presented here are proven in~\cite[Chapter 4]{ryuScaledRelativeGraphs2022}. 

Inversion of a point $z = re^{j \phi} \in \C$ is defined as the M\"obius inversion $r e^{j \phi} \mapsto (1/r)e^{j \phi}$. An operator $R$ satisfies the \emph{chord property} if, for all $z \in \SRG(R) \setminus \{ \infty \}$, it holds that $[z, \bar{z}] \subset \SRG(R)$ (see~\cite[Ch. 4.4]{ryuScaledRelativeGraphs2022}). An operator $R$ is said to satisfy the left-hand (right-hand) arc property if for all $z \in \SRG(R)$, it holds that $\operatorname{Arc}^-(z, \bar{z}) \subset \SRG(R)$ ($\operatorname{Arc}^+(z, \bar{z}) \subset \SRG(R)$), where $\operatorname{Arc}^-,\operatorname{Arc}^+$ are defined in~\cite[Ch. 4.5]{ryuScaledRelativeGraphs2022}. If $R$ satisfies the left-hand, right-hand, or both arc properties, it is said to satisfy \emph{an} arc property.

\begin{proposition}\label{prop:srg_calculus}
    Let $0 \neq \alpha \in \R$ and let $R, S$ be arbitrary operators on the Hilbert space $\mathcal{L}$. Then, 
    \begin{enumerate}[label=\alph*.]
        \item\label{eq:srg_calculus_alpha} $\SRG(\alpha R) = \SRG(R \alpha) = \alpha \SRG(R)$,
        \item\label{eq:srg_calculus_plus_one} $\SRG(I + R) = 1 + \SRG(R)$, where $I$ denotes the identity on $\mathcal{L}$,
        \item\label{eq:srg_calculus_inverse} $\SRG(R^{-1}) = (\SRG(R))^{-1}$.
        \item\label{eq:srg_calculus_parallel} If at least one of $R, S$ satisfies the chord property, then $\SRG(R + S) \subset \SRG(R) + \SRG(S)$.
        \item\label{eq:srg_calculus_series} If at least one of $R, S$ satisfies an arc property, then $\SRG(R S) \subset \SRG(R) \SRG(S)$.
    \end{enumerate}
    If the SRGs above contain $\infty$ or are the empty set, the above operations are slightly different, see~\cite{ryuScaledRelativeGraphs2022}. 
\end{proposition}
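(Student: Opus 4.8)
The plan is to work directly from the definition of $z_R(u_1,u_2)$, exploiting that the angle in \eqref{eq:def_srg_angle} depends on $u_1,u_2$ only through the real quantities $\norm{u_1-u_2}$, $\norm{Ru_1-Ru_2}$, and $\mathrm{Re}\inner{u_1-u_2}{Ru_1-Ru_2}$; hence I may regard $\mathcal{L}$ as a \emph{real} Hilbert space under $\mathrm{Re}\inner{\cdot}{\cdot}$, so that orthogonality and triangle inequalities are available. Writing $\Delta u = u_1-u_2$ and $\Delta y = Ru_1-Ru_2$, the backbone of every part is the \emph{planar representation}: placing $\Delta u$ along the positive real axis (at length $\norm{\Delta u}$) and $\Delta y$ at length $\norm{\Delta y}$ and angle $\pm\angle(\Delta u,\Delta y)$ in a real $2$-plane, the set $z_R(u_1,u_2)$ is precisely the complex number $\Delta y/\Delta u$ together with its conjugate. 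A one-line computation then records $\mathrm{Re}\,z_R = \mathrm{Re}\inner{\Delta u}{\Delta y}/\norm{\Delta u}^2$ and $|\mathrm{Im}\,z_R| = \norm{\Delta y^\perp}/\norm{\Delta u}$, where $\Delta y^\perp$ is the component of $\Delta y$ orthogonal to $\Delta u$.

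Parts \ref{eq:srg_calculus_alpha}--\ref{eq:srg_calculus_inverse} follow by substitution into this representation. For \ref{eq:srg_calculus_alpha}, replacing $R$ by $\alpha R$ scales $\Delta y$ by $\alpha$: for $\alpha>0$ the magnitude scales by $\alpha$ and the angle is unchanged, while for $\alpha<0$ the magnitude scales by $|\alpha|$ and $\cos\angle$ flips sign, sending $\theta\mapsto\pi-\theta$, so the conjugate pair maps to $-|\alpha|$ times the original, i.e. to $\alpha z_R$; the case $R\alpha$ follows from the bijective change of variables $v_i=\alpha u_i$. For \ref{eq:srg_calculus_plus_one}, $(I+R)u_1-(I+R)u_2=\Delta u+\Delta y$, which in the planar picture adds the positive real number $\norm{\Delta u}$ to $\Delta y$, i.e. adds $+1$ after dividing by $\Delta u$; I would confirm the match by expanding $\norm{\Delta u+\Delta y}^2$ and $\mathrm{Re}\inner{\Delta u}{\Delta u+\Delta y}$. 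For \ref{eq:srg_calculus_inverse}, assuming $R$ invertible and substituting $u_i=R^{-1}v_i$ swaps the input and output increments, giving reciprocal magnitude $\norm{\Delta u}/\norm{\Delta y}$ and unchanged angle $\angle(\Delta y,\Delta u)=\angle(\Delta u,\Delta y)$, which is exactly the M\"obius inversion $re^{j\phi}\mapsto(1/r)e^{j\phi}$.

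For \ref{eq:srg_calculus_parallel}, decompose the two output increments $\Delta y_R=Ru_1-Ru_2$ and $\Delta y_S=Su_1-Su_2$ along and orthogonal to $\Delta u$. The parallel parts yield $\mathrm{Re}\,z_{R+S}=\mathrm{Re}\,z_R+\mathrm{Re}\,z_S$ exactly, while $(\Delta y_R+\Delta y_S)^\perp=\Delta y_R^\perp+\Delta y_S^\perp$, so the forward and reverse triangle inequalities in the orthogonal complement give $\big|\,|\mathrm{Im}\,z_R|-|\mathrm{Im}\,z_S|\,\big|\le|\mathrm{Im}\,z_{R+S}|\le|\mathrm{Im}\,z_R|+|\mathrm{Im}\,z_S|$. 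If (say) $S$ has the chord property, the whole vertical segment with real part $\mathrm{Re}\,z_S$ and imaginary modulus at most $|\mathrm{Im}\,z_S|$ lies in $\SRG(S)$; choosing $p\in\{z_R,\bar z_R\}$ whose imaginary part matches the sign of $\mathrm{Im}\,z_{R+S}$ and setting $q=z_{R+S}-p$, the two inequalities force $|\mathrm{Im}\,q|\le|\mathrm{Im}\,z_S|$, so $q\in\SRG(S)$ and $z_{R+S}=p+q\in\SRG(R)+\SRG(S)$.

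For \ref{eq:srg_calculus_series}, write $\Delta v=Su_1-Su_2$ and $\Delta w=RSu_1-RSu_2$; the magnitude factorizes as $\norm{\Delta w}/\norm{\Delta u}=(\norm{\Delta w}/\norm{\Delta v})(\norm{\Delta v}/\norm{\Delta u})=|z_R|\,|z_S|$. The angle obeys the Hilbert-space triangle inequality $\big|\angle(\Delta u,\Delta v)-\angle(\Delta v,\Delta w)\big|\le\angle(\Delta u,\Delta w)\le\angle(\Delta u,\Delta v)+\angle(\Delta v,\Delta w)$, so $z_{RS}$ lies on the circle of radius $|z_R|\,|z_S|$ at an angle in $[\,|\theta_R-\theta_S|,\ \theta_R+\theta_S\,]$, with $\theta_S=\angle(\Delta u,\Delta v)$, $\theta_R=\angle(\Delta v,\Delta w)$. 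The four products of the two conjugate pairs supply the endpoints $\pm(\theta_R\pm\theta_S)$, and if one factor satisfies an arc property, multiplying its arc by a fixed point of the other factor fills in precisely the arc covering $[\theta_R-\theta_S,\theta_R+\theta_S]$ (and its conjugate) at radius $|z_R|\,|z_S|$, which contains the realized angle, so $z_{RS}\in\SRG(R)\SRG(S)$ (subject to the stated caveats when $\infty$ is involved). I expect \ref{eq:srg_calculus_series} to be the main obstacle: unlike the planar reductions above, $\Delta u,\Delta v,\Delta w$ generically span a three-dimensional subspace, so the angle is not a mere sum and must be controlled by the angular triangle inequality (itself reduced to the law of cosines on the relevant sphere), and matching it to the correct left/right arc demands the most care; the sign bookkeeping for the chord in \ref{eq:srg_calculus_parallel} is the secondary subtlety.
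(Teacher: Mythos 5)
Note first that the paper does not actually prove Proposition~\ref{prop:srg_calculus}: it defers entirely to~\cite[Chapter 4]{ryuScaledRelativeGraphs2022}, so your attempt can only be compared against the cited proofs. Parts~\ref{eq:srg_calculus_alpha}--\ref{eq:srg_calculus_parallel} of your argument are correct and essentially identical to those: the planar representation $\mathrm{Re}\,z_R = \mathrm{Re}\inner{\Delta u}{\Delta y}/\norm{\Delta u}^2$, $|\mathrm{Im}\,z_R| = \norm{\Delta y^\perp}/\norm{\Delta u}$ is the right device for \ref{eq:srg_calculus_alpha}--\ref{eq:srg_calculus_inverse}, and your sign bookkeeping in \ref{eq:srg_calculus_parallel} is sound: with $p$ chosen so that $\mathrm{Im}\,p$ matches the sign of $\mathrm{Im}\,z_{R+S}$, the two triangle inequalities do force $|\mathrm{Im}\,q|\le|\mathrm{Im}\,z_S|$, so $q$ lies on the chord $[z_S,\bar z_S]$.

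The genuine gap is in part~\ref{eq:srg_calculus_series}, exactly at the point you flagged as delicate. The set of realizable angles is \emph{not} $[|\theta_R-\theta_S|,\ \theta_R+\theta_S]$: for three vectors $\Delta u,\Delta v,\Delta w$ in a real Hilbert space, positive semidefiniteness of the $3\times 3$ Gram matrix is equivalent to the \emph{full} set of spherical triangle inequalities, which includes the second constraint $\theta_{RS}+\theta_R+\theta_S\le 2\pi$ in addition to the one you quote. Your covering claim --- that the arc of one factor times the conjugate pair of the other fills in the arc over $[\theta_R-\theta_S,\ \theta_R+\theta_S]$ --- is true for the right-hand arc but \emph{false} for the left-hand arc. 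Concretely, take $\theta_R=\theta_S=3\pi/4$: the products of $\operatorname{Arc}^-(z_R,\bar z_R)$ with $\{z_S,\bar z_S\}$ cover only the angles of modulus at most $\pi/2$ at radius $|z_R z_S|$, whereas your claimed interval is all of $[0,\pi]$. So with only your two inequalities the left-arc case cannot be closed: a point at angle $3\pi/4$ is admissible under your bounds yet lies outside the product set. The proof goes through only after adding the second spherical inequality, which shows the realizable angles lie in $[\,|\theta_R-\theta_S|,\ \min(\theta_R+\theta_S,\ 2\pi-\theta_R-\theta_S)\,]$; one then checks that this smaller interval \emph{is} covered by the left-arc products (in the example above, $[0,\pi/2]$ exactly). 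This is how the cited proof in Ryu et al.\ proceeds, via their characterization of realizable three-vector configurations. Two minor points, both absorbed by the proposition's final caveat sentence: in \ref{eq:srg_calculus_inverse} you assume $R$ invertible, while the calculus is meant for possibly non-invertible operators (relations), and in \ref{eq:srg_calculus_series} the degenerate pairs with $Su_1=Su_2$ need to be excluded or handled separately.
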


\section{Nonlinear System Analysis}\label{sec:nl_system_analysis}

In this section, we review the state-of-the-art methods for analyzing nonlinear feedback systems. We start with the LTI case, where the Nyquist stability criterion is discussed, since it plays central role in the analysis of nonlinear feedback systems of Lur'e form. 

So far, $R$ represents a general system. From now on, we will use $G$ for the plant, $K$ for the controller, $L=GK$ for the loop transfer, and $T$ for the closed-loop. If a system is LTI, it is often denoted by a Laplace argument. 

\subsection{The Nyquist Criterion}

\begin{figure}[tb]
    \centering

    \tikzstyle{block} = [draw, rectangle, 
    minimum height=2em, minimum width=2em]
    \tikzstyle{sum} = [draw, circle, node distance={0.5cm and 0.5cm}]
    \tikzstyle{input} = [coordinate]
    \tikzstyle{output} = [coordinate]
    \tikzstyle{pinstyle} = [pin edge={to-,thin,black}]
    
    \begin{tikzpicture}[auto, node distance = {0.3cm and 0.5cm}]
        \node [input, name=input] {};
        \node [sum, right = of input] (sum) {$\Sigma$};
        \node [block, right = of sum] (lti) {$L(s)$};
        \node [coordinate, right = of lti] (z_intersection) {};
        \node [output, right = of z_intersection] (output) {}; 
        \node [coordinate, below = of lti] (static_nl) {};
    
        \draw [->] (input) -- node {$r$} (sum);
        \draw [->] (sum) -- node {$e$} (lti);
        \draw [->] (lti) -- node [name=z] {$y$} (output);
        \draw [-] (z) |- (static_nl);
        \draw [->] (static_nl) -| node[pos=0.99] {$-$} (sum);
    \end{tikzpicture}
    
    \caption{A simple linear feedback system.}
    \label{fig:linear_feedback}
    \vspace{-0.5em}
\end{figure}
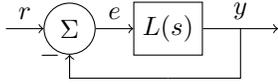

Consider the LTI feedback system in Fig.~\ref{fig:linear_feedback}, where $L(s)$ is a transfer function. In many control engineering situations, one is interested in the stability and performance aspects of this setup. In this work, we study the stability property.

\begin{theorem}\label{thm:nyquist}
    Let $n_\mathrm{z}$ denote the number of unstable closed-loop poles, and $n_\mathrm{p}$ the number of unstable poles of $L(s)$. Additionally, denote $n_\mathrm{n}$ the amount of times that $L(j \omega)$ encircles the point $-1$ in clockwise fashion as $\omega $ traverses the $D$-contour, going from $-jR$ to $jR$ and then along $Re^{j \phi}$ as $\phi$ goes from $\pi/2 \to -\pi/2$, for $R \to \infty$. The closed-loop system in Fig.~\ref{fig:linear_feedback} satisfies $n_\mathrm{z}=n_\mathrm{n}+n_\mathrm{p}$.
\end{theorem}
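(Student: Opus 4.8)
The plan is to derive this statement from the \emph{argument principle} of complex analysis, which is the standard route to the Nyquist criterion. First I would observe that the closed-loop transfer function of the system in Fig.~\ref{fig:linear_feedback} is $T(s) = L(s)/(1+L(s))$, so that the closed-loop poles are precisely the zeros of the function $f(s) := 1 + L(s)$, whereas the poles of $f$ coincide with the poles of $L(s)$. Thus, counting inside the region enclosed by the $D$-contour, the number of zeros of $f$ equals $n_\mathrm{z}$ and the number of poles of $f$ equals $n_\mathrm{p}$, both with multiplicity.

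Next I would note that, as $R \to \infty$, the $D$-contour parametrized in the statement traces the boundary of the open right-half plane $\C_{\mathrm{Re}>0}$ in the \emph{clockwise} sense: ascending the imaginary axis from $-jR$ to $jR$ keeps the right-half plane on the right, and the large arc $Re^{j\phi}$ with $\phi: \pi/2 \to -\pi/2$ closes this region from the right. Applying the argument principle to $f$ along this clockwise contour then yields that the number of clockwise encirclements of the origin by $f(s)$ equals $Z - P = n_\mathrm{z} - n_\mathrm{p}$.

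The remaining step is to translate encirclements of the origin by $f = 1 + L$ into encirclements of $-1$ by $L$, since $1 + L(s)$ winds around $0$ exactly as often as $L(s)$ winds around $-1$. On the large semicircular arc, properness of $L$ forces $L(Re^{j\phi}) \to L(\infty)$, a finite constant, so this portion contributes no net encirclement in the limit, and the count is determined entirely by the imaginary-axis portion $L(j\omega)$; this is exactly $n_\mathrm{n}$. Combining the two displays gives $n_\mathrm{n} = n_\mathrm{z} - n_\mathrm{p}$, i.e. $n_\mathrm{z} = n_\mathrm{n} + n_\mathrm{p}$.

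I expect the main technical obstacle to be the case where $L(s)$ has poles on the imaginary axis, since then $f$ has singularities \emph{on} the $D$-contour and the argument principle does not apply directly. The standard remedy is to indent the contour with small semicircles of radius $\epsilon \to 0$ that detour around each such pole (excluding it from the enclosed region), and to verify that these indentations contribute no net encirclement in the limit; I would carry out this bookkeeping and confirm that $n_\mathrm{p}$ then refers only to poles strictly in the open right-half plane, consistently with the statement.
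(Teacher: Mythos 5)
Your proposal is correct: it is the standard derivation of the Nyquist criterion from the argument principle, applied to $f(s)=1+L(s)$ on the clockwise $D$-contour, with windings of $f$ about $0$ translated into windings of $L$ about $-1$. The paper states Theorem~\ref{thm:nyquist} as a classical result and provides no proof of its own, so there is nothing to compare against; your route is the canonical one, and you correctly flag the need to indent around imaginary-axis poles. The only additional caveat worth recording is the tacit assumption that $L=N/D$ is proper with $N$ and $D$ coprime (no unstable pole--zero cancellation), so that the zeros of $1+L$ inside the contour are exactly the unstable closed-loop poles and its poles there are exactly the unstable poles of $L$.
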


Note that for stability, $n_\mathrm{z}=0$ is required.

\subsection{Nonlinear Feedback Systems}\label{sec:nonlin_feedback_systems}
The Lur'e system, as depicted in Fig.~\ref{fig:lure} consists of a SISO LTI block $G(s)$ connected with a static nonlinear function $\phi : \R \to \R$. The closed-loop system cannot be represented by a transfer function anymore, but is instead an operator $T : \Lte \to \Lte$. The closed loop operator, also known as complementary sensitivity operator, can be written as
\begin{equation}\label{eq:lure_closedloop}
    T = (G^{-1} + \phi)^{-1},
\end{equation}
which is derived via $y=G(r-\phi y) \iff G^{-1} y = r - \phi y \iff (G^{-1} + \phi) y=r$, where the nonlinearity of $\phi$ is respected. Similarly, the sensitivity can be written as 
\begin{equation}
    S = (1 + \phi G)^{-1},
\end{equation}
which is derived via $e=r-\phi y \iff e= r-\phi G e \iff (1 + \phi G) e=r$, where the multiplication order $\phi G$ is important due to the nonlinearity of $\phi$.

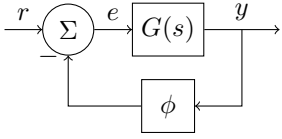
\begin{figure}[tb]
    \centering

    \tikzstyle{block} = [draw, rectangle, 
    minimum height=2em, minimum width=2em]
    \tikzstyle{sum} = [draw, circle, node distance={0.5cm and 0.5cm}]
    \tikzstyle{input} = [coordinate]
    \tikzstyle{output} = [coordinate]
    \tikzstyle{pinstyle} = [pin edge={to-,thin,black}]
    
    \begin{tikzpicture}[auto, node distance = {0.3cm and 0.5cm}]
        \node [input, name=input] {};
        \node [sum, right = of input] (sum) {$\Sigma$};
        \node [block, right = of sum] (lti) {$G(s)$};
        \node [coordinate, right = of lti] (z_intersection) {};
        \node [output, right = of z_intersection] (output) {}; 
        \node [block, below = of lti] (static_nl) {$\phi$};
    
        \draw [->] (input) -- node {$r$} (sum);
        \draw [->] (sum) -- node {$e$} (lti);
        \draw [->] (lti) -- node [name=z] {$y$} (output);
        \draw [->] (z) |- (static_nl);
        \draw [->] (static_nl) -| node[pos=0.99] {$-$} (sum);
    \end{tikzpicture}
    
    \caption{Block diagram of a Lur'e system.}
    \label{fig:lure}
    \vspace{-0.5em}
\end{figure}

We call a Lur'e system \emph{well-posed} if $r \mapsto (e, y)$ satisfies existence, uniqueness and continuity. The system is called bounded if $r \to (e, y)$ is bounded, which is equivalent to bounded-input-bounded-output (BIBO) stability. If $R$ has finite incremental $L_2$-gain, additionally $R(0)=0$ is required for BIBO stability. Note that \emph{causality} is not part of our well-posedness definition, and has to be assumed throughout.

In order to assess stability of a Lur'e system, one can use the circle criterion, see~\cite[Thm. 5.2.10]{desoerFeedbackSystemsInputoutput1975}.

\begin{theorem}\label{thm:circle}
    Let $G(s)$ be a strictly proper transfer function and let $\phi \in [k_1, k_2]$, meaning $k_1 \leq \phi(x)/x \leq k_2, \quad \forall x\in \R$. Let $n_\mathrm{p}$ be the number of poles $p$ of $G(s)$ such that $\mathrm{Re}(p)>0$. Then, the system in Fig.~\ref{fig:lure} is $L_2$-stable if it satisfies one of the conditions:
    \begin{enumerate}
        \item Let $0<k_1<k_2$. The Nyquist diagram of $G(s)$ must not intersect $D_{[-1/k_1, -1/k_2]}$ and has to encircle it $n_p$ times in counterclockwise direction.
        \item if $0=k_1<k_2$, then $n_\mathrm{p}=0$ must hold and the Nyquist diagram must satisfy 
        \begin{equation*}
            \mathrm{Re} \,G(j\omega) > -1/k_2, \quad \forall \omega \in \R.
        \end{equation*}
        \item if $k_1 <0 <k_2$, then $n_\mathrm{p}=0$ must hold and the Nyquist diagram has to be contained in the interior of $D_{[-1/k_1, -1/k_2]}$.
    \end{enumerate}
\end{theorem}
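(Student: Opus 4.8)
The plan is to collapse all three cases into a single small-gain condition on a loop-transformed system, using the Nyquist criterion (Theorem~\ref{thm:nyquist}) to account for the $n_\mathrm{p}$ unstable poles of $G$. First I would recenter the sector: set $k_0 = (k_1+k_2)/2$ and $r = (k_2-k_1)/2$, and decompose $\phi = k_0 I + \psi$, so that the shifted nonlinearity satisfies $\psi(x)/x \in [-r, r]$ and hence $\gamma(\psi) \le r$. Absorbing the constant gain $k_0$ into the plant turns the interconnection of Fig.~\ref{fig:lure} into an equivalent Lur'e loop of $\psi$ with the transformed LTI plant $\hat{G} = G(1 + k_0 G)^{-1}$ (the reference being filtered through $\hat{G}$). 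This is the classical loop transformation, and since $k_0$ is a bounded static feedback it preserves $L_2$-stability of the interconnection.

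Next I would certify stability of the transformed linear part. The poles of $\hat{G}$ are the zeros of $1 + k_0 G(s)$, so I would apply Theorem~\ref{thm:nyquist} to the loop transfer $k_0 G$: demanding $n_\mathrm{z}=0$ forces $n_\mathrm{n} = -n_\mathrm{p}$, i.e. the Nyquist diagram of $G$ must encircle $-1/k_0$ exactly $n_\mathrm{p}$ times counterclockwise. With $\hat{G}$ stable and $\gamma(\psi) \le r$, the small-gain theorem then gives $L_2$-stability of the loop as soon as $\|\hat{G}\|_\infty\, r < 1$, that is $|\hat{G}(j\omega)| < 1/r$ for all $\omega$. (Since $\hat{G}$ is now stable, this bound could equally be obtained through the SRG calculus of Proposition~\ref{prop:srg_calculus}; the point of the transformation is precisely to move the unstable poles of $G$ into an encirclement count, sidestepping the SRG pitfall that this paper addresses.)

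The remaining work is to turn the two frequency-domain conditions into the stated geometry. Writing $z = G(j\omega)$, the small-gain inequality $r|z| < |1 + k_0 z|$ squares to $(r^2-k_0^2)|z|^2 - 2k_0\,\mathrm{Re}(z) - 1 < 0$, and since $r^2 - k_0^2 = -k_1 k_2$, completing the square shows this is exactly the requirement that $z$ lie outside the circle through $-1/k_1$ and $-1/k_2$. The sign of $k_1 k_2$ then selects the case: for $0<k_1<k_2$ the coefficient $r^2-k_0^2<0$ and the admissible region is the exterior of the bounded disk $D_{[-1/k_1,-1/k_2]}$ (Case 1); as $k_1\to 0$ the quadratic degenerates to the half-plane $\mathrm{Re}\,G(j\omega) > -1/k_2$ (Case 2); and for $k_1<0<k_2$ the coefficient becomes positive, the inequality reverses, and the region is the interior of $D_{[-1/k_1,-1/k_2]}$ (Case 3). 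To upgrade the point condition to the disk condition in Case 1, I would observe that $-1/k_0 \in (-1/k_1,-1/k_2)$ lies in the interior of the disk, so once the diagram avoids the disk its encirclement count of $-1/k_0$ coincides with that of the whole disk.

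The main obstacle I anticipate is the bookkeeping tying the point-encirclement produced by Theorem~\ref{thm:nyquist} to the disk-encirclement and containment statements, together with the degenerate limits: when $k_1 k_2 \le 0$ the disk is either unbounded (Case 2) or straddles the origin (Case 3), so a Nyquist diagram confined to a bounded region cannot encircle it and $n_\mathrm{p}=0$ must be imposed separately; checking that the small-gain condition genuinely collapses to a half-plane as $k_1\to 0$ and that the inequality flips for $k_1 k_2<0$ is where care is needed. A secondary subtlety is that the small-gain step is a nonincremental argument, so the conclusion is $L_2$-stability rather than an incremental bound, which is consistent with the purely pointwise sector hypothesis $\phi\in[k_1,k_2]$.
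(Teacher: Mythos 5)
The paper itself contains no proof of this theorem: it is imported directly from \cite[Thm.~5.2.10]{desoerFeedbackSystemsInputoutput1975}, and your argument --- midpoint loop transformation $\phi = k_0 I + \psi$ with $k_0=(k_1+k_2)/2$, a Nyquist encirclement count to certify stability of $\hat{G} = G(1+k_0 G)^{-1}$, the small-gain condition $\|\hat{G}\|_\infty r < 1$, and completing the square to recover the three disk/half-plane geometries --- is essentially the classical proof found in that reference, so your proposal is correct and takes the same route as the paper's source. The only loose ends are routine: the degenerate sub-case $k_0=0$ (possible in Case 3 when $k_2=-k_1$), where no transformation is needed and $n_\mathrm{p}=0$ gives stability of $\hat{G}=G$ directly; the explicit check that $-1/k_0$ lies outside the admissible region in Cases 2 and 3 (so the confined Nyquist curve has winding number zero about it); and the compactness argument needed to pass from the pointwise strict inequality on $G(j\omega)$ to the strict norm bound $\|\hat{G}\|_\infty r<1$, which holds for rational strictly proper $G$.
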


We denote by $\partial \phi \in [k_1, k_2]$, if the sector condition is satisfied incrementally.


\subsection{System Analysis with Scaled Relative Graphs}\label{sec:sys_analysis_w_srgs}

Chaffey et al.~\cite{chaffeyGraphicalNonlinearSystem2023}, has applied the SRG for the first time to the analysis of feedback systems. In this section, we introduce the SRG of various system components.

Define the Nyquist diagram as $\operatorname{Nyquist}(G) = \{ G(j \omega)\mid \omega \in \R\}$.

\begin{theorem}\label{thm:lti_srg}
    Let $R : \Lte \to \Lte$ be stable and LTI with transfer function $G_R(s)$, then $\SRG(R) \cap \C_{\mathrm{Im} \geq 0}$ is the h-convex hull of $\operatorname{Nyquist}(G_R) \cap \C_{\mathrm{Im} \geq 0}$. The h-convex hull of a set $A$ is obtained by adding all circle segments between $z_1\in A$ and $z_2 \in A$, where the circle is centered on $\R$ and passes through $z_1$ and $z_2$. 
\end{theorem}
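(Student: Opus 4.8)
The plan is to exploit linearity of $R$ to reduce the SRG to a frequency-domain computation, and then to transform the whole problem by the change of coordinates $z = x+jy \mapsto (x,\,x^2+y^2)$, under which both the SRG and the h-convex hull become ordinary convex hulls of one and the same planar curve. First, since $R$ is linear, $Ru_1-Ru_2 = R(u_1-u_2)$, so every point of $\SRG(R)$ arises from a single signal $v=u_1-u_2$, and the SRG is symmetric about $\R$ by the $\pm$ in its definition; hence it suffices to describe $\SRG(R)\cap\C_{\mathrm{Im}\geq 0}$. Writing such a point as $z=x+jy$ with $y\geq 0$, its modulus is $\norm{Rv}/\norm{v}$ and its argument is $\angle(v,Rv)$, so that
\[ x = \frac{\mathrm{Re}\inner{v}{Rv}}{\norm{v}^2}, \qquad x^2+y^2 = \frac{\norm{Rv}^2}{\norm{v}^2}. \]

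Next I would pass to the frequency domain via Parseval. Extending causal signals by zero and writing $\hat v$ for the Fourier transform, I introduce the even probability measure $d\mu(\omega) = |\hat v(j\omega)|^2\,d\omega/(2\pi\norm{v}^2)$ and use $\widehat{Rv}=G_R\hat v$ together with $G_R(-j\omega)=\overline{G_R(j\omega)}$ to obtain
\[ x = \int_\R \mathrm{Re}\,G_R(j\omega)\,d\mu(\omega), \qquad x^2+y^2 = \int_\R |G_R(j\omega)|^2\,d\mu(\omega). \]
The key step is then the coordinate change $\Phi:\C_{\mathrm{Im}\geq 0}\to\{(u,w):w\geq u^2\}$, $\Phi(x+jy)=(x,\,x^2+y^2)$, which is a homeomorphism with inverse $(u,w)\mapsto u+j\sqrt{w-u^2}$. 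Under $\Phi$, the two displays say exactly that $\Phi(z)$ is the $\mu$-barycenter of the curve $\Gamma=\{(\mathrm{Re}\,G_R(j\omega),\,|G_R(j\omega)|^2):\omega\in\R\}$, and this $\Gamma$ is precisely $\Phi(\operatorname{Nyquist}(G_R)\cap\C_{\mathrm{Im}\geq 0})$ because $G_R(j\omega)$ and its conjugate yield the same point of $\Gamma$. As $v$ ranges over $\Lte$ the measure $\mu$ ranges, up to closure, over all even probability measures on $\R$, which suffices since $\Gamma$ is invariant under $\omega\mapsto-\omega$; hence the set of barycenters is $\operatorname{conv}(\Gamma)$. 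The inclusion of $\Phi(\SRG(R)\cap\C_{\mathrm{Im}\geq 0})$ into $\operatorname{conv}(\Gamma)$ is immediate, and the reverse inclusion I would get by realizing finitely supported frequency measures as narrowband signals $v(t)=\sum_i e^{-\varepsilon t}\cos(\omega_i t)$ and letting $\varepsilon\downarrow 0$. Finally, a circle $(x-c)^2+y^2=\rho^2$ centered on $\R$ transforms under $\Phi$ into the line $w=2cu+(\rho^2-c^2)$, so arcs of $\R$-centered circles map to line segments and the h-convex hull maps to $\operatorname{conv}(\Gamma)$; combining gives $\Phi(\SRG(R)\cap\C_{\mathrm{Im}\geq 0})=\operatorname{conv}(\Gamma)=\Phi(\text{h-convex hull})$, and applying $\Phi^{-1}$ yields the claim.

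I expect the main obstacle to be the realizability argument: showing that the measures $\mu$ induced by causal $L_2$ signals $v$, whose transforms lie in a Hardy space, can approximate every point mass and hence every finitely supported measure, so that the full convex hull of $\Gamma$ is attained. Care is needed to ensure that the cross terms between narrowband components vanish as their supports separate, and that the limiting behavior as $\omega\to\pm\infty$ (where $G_R(j\omega)$ tends to its feedthrough value) is included, so that $\Gamma$ is compact and $\operatorname{conv}(\Gamma)$ closed, matching the h-convex hull exactly rather than only up to closure. The geometric correspondence between $\R$-centered circles and lines under $\Phi$, though it is the conceptual heart of the argument, is then a routine verification.
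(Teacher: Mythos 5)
Your proposal is, in substance, the proof the paper relies on: the paper defers this theorem to Chaffey et al.\ (building on Pates), where the SRG of a stable LTI operator is likewise computed by passing to the frequency domain via Parseval --- reducing each SRG point to the two moments $\mathrm{Re}\inner{v}{Rv}/\norm{v}^2$ and $\norm{Rv}^2/\norm{v}^2$, i.e.\ barycenters of the curve $\bigl(\mathrm{Re}\,G_R(j\omega),\,|G_R(j\omega)|^2\bigr)$ against the normalized energy spectrum --- followed by exactly your observation that the map $(x,y)\mapsto(x,\,x^2+y^2)$ sends circles centered on $\R$ to lines, turning the h-convex hull into an ordinary convex hull. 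Your narrowband-signal realizability argument plays the role of the convexity/attainability step there (handled via the joint numerical range of the associated Hermitian forms), and the closure caveat you flag is glossed at essentially the same level in the cited proof, so the approach stands as correct and essentially identical.
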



\begin{proposition}\label{prop:static_nl_srg}
    If $\partial \phi \in [k_1, k_2]$, i.e. $\phi$ satisfies an incremental sector condition, then one has 
    \begin{equation*}
        \SRG(\phi) \subset D_{[k_1, k_2]}.
    \end{equation*}
    Furthermore, if there is a point at which the slope of $\phi$ switches in a discontinuous fashion from $k_1$ to $k_2$, then the inclusion becomes an equality.
\end{proposition}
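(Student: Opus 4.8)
The plan is to reduce membership in $D_{[k_1,k_2]}$ to a single quadratic inequality in $L_2$ inner products, which then follows from the pointwise sector condition by integration. Fix $z \in \SRG(\phi)$, produced by some $u_1, u_2 \in L_2$, and write $\delta u = u_1 - u_2$ and $\delta y = \phi u_1 - \phi u_2$. Since all signals are real-valued, $\mathrm{Re}\inner{\delta u}{\delta y} = \inner{\delta u}{\delta y}$, so from the definition of $z_\phi(u_1,u_2)$ the real part and modulus of $z = a \pm jb$ satisfy $a = \inner{\delta u}{\delta y}/\norm{\delta u}^2$ and $a^2 + b^2 = \norm{\delta y}^2/\norm{\delta u}^2$. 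The disk $D_{[k_1,k_2]}$ has center $(k_1+k_2)/2$ and radius $(k_2-k_1)/2$, so $z \in D_{[k_1,k_2]}$ is equivalent to $(a-k_1)(a-k_2) + b^2 \leq 0$. Substituting the two expressions above and clearing the positive factor $\norm{\delta u}^2$ turns this into $\inner{\delta y - k_1\delta u}{\,\delta y - k_2 \delta u} \leq 0$.

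To establish this inner-product inequality I would argue pointwise. For each $t$, the incremental sector condition $\partial\phi \in [k_1,k_2]$ forces $\delta y(t)$ to lie between $k_1 \delta u(t)$ and $k_2 \delta u(t)$ (checking the cases $\delta u(t) > 0$, $\delta u(t) < 0$, and $\delta u(t) = 0$, the last giving $\delta y(t) = 0$ since $\phi$ is a function). Hence $(\delta y(t) - k_1 \delta u(t))(\delta y(t) - k_2 \delta u(t)) \leq 0$ for almost every $t$, and integrating over $\R_{\geq 0}$ yields exactly $\inner{\delta y - k_1 \delta u}{\,\delta y - k_2 \delta u} \leq 0$. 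This proves $\SRG(\phi) \subset D_{[k_1,k_2]}$, and this half is routine once the reduction is in place.

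For the equality claim I would prove the reverse inclusion $D_{[k_1,k_2]} \subset \SRG(\phi)$ by explicit construction. Writing the pointwise incremental slope $s(t) = \delta y(t)/\delta u(t) \in [k_1,k_2]$ and normalizing $\norm{\delta u} = 1$, the identities above read $a = \int s(t)\,\delta u(t)^2\,dt$ and $a^2 + b^2 = \int s(t)^2\,\delta u(t)^2\,dt$; that is, $a$ is the mean and $b^2$ the variance of the slope $s$ under the probability measure $\delta u(t)^2\,dt$. The crucial role of the hypothesis is that a discontinuous switch of slope from $k_1$ to $k_2$ lets input pairs straddling that point realize every intermediate incremental slope in $[k_1,k_2]$: for a kink of the form $\phi(x)=k_1 x$ for $x\leq x^\star$ and $\phi(x)=k_2 x$ for $x \geq x^\star$, the ratio $(\phi(x_1)-\phi(x_2))/(x_1-x_2)$ with $x_1 > x^\star > x_2$ sweeps all of $(k_1,k_2)$. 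Using simple $L_2$ functions I can then realize any finitely supported slope distribution on $[k_1,k_2]$ with prescribed mean $a \in [k_1,k_2]$ and any variance in $[0,(a-k_1)(k_2-a)]$, the two-point distribution at $k_1,k_2$ attaining the maximum. Since the disk condition is precisely $b^2 \leq (a-k_1)(k_2-a)$, and the definition of $z_\phi$ supplies both signs $\pm b$, every point of $D_{[k_1,k_2]}$ is attained.

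The main obstacle is this last step: showing that the discontinuous slope switch grants access to the entire range of incremental slopes, and then assembling genuine $L_2$ inputs (via simple functions supported on finite-measure sets) that realize an arbitrary prescribed mean-variance pair of slopes. Verifying that the extremal variance $(a-k_1)(k_2-a)$ traces the boundary circle while intermediate variances fill the interior is the geometric heart of the equality; by comparison, the inclusion in the first part is a direct consequence of the pointwise sector inequality.
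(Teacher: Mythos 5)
Your proof is correct, and it follows essentially the same route as the proof the paper relies on (the paper does not prove Proposition~\ref{prop:static_nl_srg} itself but cites \cite{chaffeyGraphicalNonlinearSystem2023}): the containment via reducing membership in $D_{[k_1,k_2]}$ to $\inner{\delta y - k_1 \delta u}{\delta y - k_2 \delta u} \leq 0$ and integrating the pointwise sector inequality, and the equality via piecewise-constant input pairs straddling the switching point that realize arbitrary slope distributions on $[k_1,k_2]$. Your mean/variance packaging of the tightness step, with the two-point distribution at $\{k_1,k_2\}$ attaining the extremal variance $(a-k_1)(k_2-a)$, is a clean probabilistic restatement of the same construction rather than a genuinely different argument.
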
 

For proofs of Theorem~\ref{thm:lti_srg} and Proposition~\ref{prop:static_nl_srg}, see~\cite{chaffeyGraphicalNonlinearSystem2023}. In order to compare our results to the circle criterion (Theorem~\ref{thm:circle}) later in Section~\ref{sec:compare_circle_criterion}, we prove the following important extension of Proposition~\ref{prop:static_nl_srg}.

\begin{proposition}\label{prop:static_nl_srg_non_incremental}
    If $\phi \in [k_1, k_2]$, i.e., $\phi$ satisfies a sector condition, then one has 
    \begin{equation*}
        \SG_{0}(\phi) \subset D_{[k_1, k_2]}.
    \end{equation*}
\end{proposition}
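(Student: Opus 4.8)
The plan is to reduce the claimed set inclusion to a pointwise algebraic inequality that is enforced directly by the sector condition. First I would unpack the definition of $\SG_0(\phi)$. Since the sector bound $k_1 \le \phi(x)/x \le k_2$ forces $\phi(0)=0$, taking the fixed input to be $0$ gives $\phi \cdot 0 = 0$, so every element of $\SG_0(\phi)$ has the form $z = a\, e^{\pm j\theta}$ with magnitude $a = \norm{\phi u}/\norm{u}$ and angle $\theta = \angle(u, \phi u)$, for some $u \neq 0$. (Where $\phi u \equiv 0$ or $u=0$ the point is $z=0$; this can only occur when $0\in[k_1,k_2]$, in which case $0 \in D_{[k_1,k_2]}$, so these degenerate cases cause no trouble.) Because both signs $\pm$ appear and $D_{[k_1,k_2]}$ is symmetric about $\R$, it suffices to test membership of a single such point.

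Next I would write out what membership in $D_{[k_1,k_2]}$ means: this disk has center $c=(k_1+k_2)/2$ on the real axis and radius $r=(k_2-k_1)/2$, so $z \in D_{[k_1,k_2]}$ is equivalent to $|z-c|^2 \le r^2$. Expanding with $z = a\cos\theta \pm j a\sin\theta$ and using the identity $4 k_1 k_2 = (k_1+k_2)^2 - (k_2-k_1)^2$, the condition collapses to the clean inequality
\begin{equation*}
    a^2 - (k_1+k_2)\, a\cos\theta + k_1 k_2 \le 0.
\end{equation*}
Then I would translate the geometric quantities back to inner products. By the definition of the angle in \eqref{eq:def_srg_angle}, $a\cos\theta = \mathrm{Re}\inner{u}{\phi u}/\norm{u}^2 = \inner{u}{\phi u}/\norm{u}^2$ and $a^2 = \norm{\phi u}^2/\norm{u}^2$. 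Multiplying the displayed inequality by $\norm{u}^2>0$ turns it into
\begin{equation*}
    \norm{\phi u}^2 - (k_1+k_2)\inner{u}{\phi u} + k_1 k_2 \norm{u}^2 \le 0,
\end{equation*}
which, writing the norms and inner product as integrals, is exactly
\begin{equation*}
    \int \bigl( \phi(u(t)) - k_1 u(t)\bigr)\bigl(\phi(u(t)) - k_2 u(t)\bigr)\, dt \le 0.
\end{equation*}

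Finally, the crux is that the integrand is pointwise nonpositive: the sector bound says that for $x>0$ one has $\phi(x)-k_1 x \ge 0$ and $\phi(x)-k_2 x \le 0$, while for $x<0$ both inequalities reverse, so in either case $(\phi(x)-k_1 x)(\phi(x)-k_2 x) \le 0$ (and $=0$ at $x=0$). Hence the integral is nonpositive and the inclusion follows. The step I would be most careful about is the reduction of the geometric membership condition to the factored quadratic form; once the factorization $(\phi - k_1 u)(\phi - k_2 u)$ is recognized, the sector condition does all the remaining work. Note that, unlike the incremental statement in Proposition~\ref{prop:static_nl_srg}, here I only control the angle and magnitude relative to the single reference input $0$, which is exactly why the weaker (non-incremental) sector condition suffices.
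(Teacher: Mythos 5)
Your proof is correct and is essentially the paper's argument: the paper simply cites the proof of Proposition 9 in \cite{chaffeyGraphicalNonlinearSystem2023} with $u_2 = 0$, and that proof is exactly the computation you spell out — reduce disk membership to the quadratic $\norm{\phi u}^2 - (k_1+k_2)\inner{u}{\phi u} + k_1 k_2 \norm{u}^2 \le 0$ and observe that the factored integrand $(\phi(u(t))-k_1 u(t))(\phi(u(t))-k_2 u(t))$ is pointwise nonpositive under the sector condition. Your explicit handling of the degenerate cases ($\phi u \equiv 0$, forcing $\phi(0)=0$) is a minor bonus of care beyond the paper's one-line citation.
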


\begin{proof}
    Exactly the same as the proof of~\cite[Proposition 9]{chaffeyGraphicalNonlinearSystem2023}, where one picks $u_2 = 0$.
\end{proof}


We have the following theorem from~\cite{chaffeyGraphicalNonlinearSystem2023}, updated in~\cite{chaffeyHomotopyTheoremIncremental2024}, for any system $H_1$ feedback interconnected with $H_2$, as displayed in Fig.~\ref{fig:chaffey_thm2}.

\begin{theorem}\label{thm:chaffey_thm2}
    Consider $H_1, H_2$ be operators on $\Lte$, where $\Gamma(H_1) < \infty$ and $H_2$ satisfies for all $\tau \in (0, 1]$
    \begin{equation*}
        \dist(\SRG(H_1)^{-1}, -\tau \SRG(H_2)) \geq r_m >0,
    \end{equation*}
    and at least one of $H_1, H_2$ obeys the chord property. Then, the feedback connection in Fig.~\ref{fig:chaffey_thm2} has an incremental $L_2$-gain bound of $1/r_m$ and is well-posed~\cite{chaffeyHomotopyTheoremIncremental2024}. Upon replacing $\SRG$ with $\SG_0$, one obtains a non-incremental $L_2$-gain bound. However, no conclusion can be made regarding the well-posedness of the closed loop system. 
\end{theorem}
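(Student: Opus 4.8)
The plan is to identify the relevant closed-loop operator, bound its gain through the SRG calculus of Proposition~\ref{prop:srg_calculus}, and then obtain well-posedness separately from the homotopy construction. Writing the interconnection of Fig.~\ref{fig:chaffey_thm2} in the Lur'e form of Eq.~\eqref{eq:lure_closedloop}, with $H_1$ playing the role of $G$ and $H_2$ that of $\phi$, the closed-loop operator is $T = (H_1^{-1} + H_2)^{-1}$. Since the excerpt records that the incremental gain equals the radius of the SRG, namely $\Gamma(T) = \rmin(\SRG(T))$, it suffices to bound $\rmin(\SRG(T))$.

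First I would push the SRG through the algebraic structure of $T$. By the inversion rule of Proposition~\ref{prop:srg_calculus}, $\SRG(T) = (\SRG(H_1^{-1} + H_2))^{-1}$ and $\SRG(H_1^{-1}) = \SRG(H_1)^{-1}$. The chord-property hypothesis then licenses the sum rule of Proposition~\ref{prop:srg_calculus}, giving
\begin{equation*}
  \SRG(H_1^{-1} + H_2) \subseteq \SRG(H_1)^{-1} + \SRG(H_2).
\end{equation*}
Only the $\tau = 1$ instance of the hypothesis is needed here: it yields $\dist\big(0, \SRG(H_1)^{-1} + \SRG(H_2)\big) = \dist\big(\SRG(H_1)^{-1}, -\SRG(H_2)\big) \geq r_m$, so every point of $\SRG(H_1^{-1}+H_2)$ has modulus at least $r_m$. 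Under the M\"obius inversion $re^{j\phi}\mapsto (1/r)e^{j\phi}$ this set maps into $D_{1/r_m}(0)$, whence $\Gamma(T) = \rmin(\SRG(T)) \leq 1/r_m$. Replacing $\SRG$ by $\SG_0$ throughout and using $\gamma(T) = \rmin(\SG_0(T))$ reproduces the non-incremental bound by an entirely analogous computation valid for the scaled graph at zero.

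For well-posedness I would invoke the homotopy $T_\tau = (H_1^{-1} + \tau H_2)^{-1}$ for $\tau \in [0,1]$. At $\tau = 0$ the map is $T_0 = H_1$, which is well-posed because $\Gamma(H_1) < \infty$; the \emph{full} hypothesis (all $\tau \in (0,1]$) yields, by the estimate above applied to each $T_\tau$, the uniform incremental gain bound $1/r_m$ along the entire path. The homotopy theorem of~\cite{megretskiSystemAnalysisIntegral1997,chaffeyHomotopyTheoremIncremental2024} then propagates well-posedness from $\tau = 0$ to $\tau = 1$. This is exactly where the incremental structure is indispensable: the continuation argument rests on Lipschitz (contraction) estimates that $\SG_0$ does not supply, which is why the non-incremental case delivers a gain bound but no well-posedness conclusion.

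The step I expect to be most delicate is the bookkeeping of the chord and arc properties through the inversion rule. M\"obius inversion exchanges the chord and arc properties — a vertical chord $[z,\bar z]$ maps to a circular arc and vice versa — so the hypothesis that \emph{one} of $H_1, H_2$ is chordal must be matched to the operator that actually enters the sum $H_1^{-1} + H_2$. Care is therefore needed to confirm that the summand invoked in the sum rule genuinely carries the chord property, or to reorganize the computation via the product form $H_1(I + H_2 H_1)^{-1}$ together with the arc-based product rule when it is $H_1$, rather than $H_2$, that is chordal.
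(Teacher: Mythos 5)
Your overall architecture (SRG calculus for the gain bound, homotopy only for well-posedness) contains a genuine gap, and it is exactly the pitfall this paper is devoted to exposing; note also that the paper does not prove Theorem~\ref{thm:chaffey_thm2} itself but imports it from~\cite{chaffeyGraphicalNonlinearSystem2023,chaffeyHomotopyTheoremIncremental2024}. The flawed step is your claim that ``only the $\tau=1$ instance of the hypothesis is needed'' for $\Gamma(T)\leq 1/r_m$. The identity $\Gamma(T)=\rmin(\SRG(T))$ presupposes that $T$ is a well-defined operator on $L_2$; if the loop fails to be $L_2$-stable, the relation $(H_1^{-1}+H_2)^{-1}$ restricted to $L_2$ pairs silently omits the unbounded trajectories, and what your calculus chain bounds is $\SRG_\mathcal{U}(T)$ over the stabilizing input set $\mathcal{U}$, not $\Gamma(T)$ (this is precisely the discussion in Section~\ref{sec:pitfall}). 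The paper's own example refutes your shortcut concretely: take $H_1=L$ with $L(s)=\frac{-2}{s^2+s+1}$ (stable, so $\Gamma(H_1)<\infty$) and $H_2=I$, so $\SRG(H_2)=\{1\}$, which trivially satisfies the chord property. The $\tau=1$ separation holds with some $r_m>0$, since $0\notin 1+\SRG(L)^{-1}$ (Fig.~\ref{fig:srg_L1_inv_p_1}), and your argument would conclude $\Gamma(T)\leq 1/r_m<\infty$; yet by Theorem~\ref{thm:nyquist} the closed loop is unstable, since $\operatorname{Nyquist}(L)$ encircles $-1$ once clockwise. What saves the actual theorem is the \emph{full} $\tau$-family hypothesis, which fails in this example: $1/L(0)=-1/2\in\SRG(L)^{-1}$, so the distance to $-\tau\SRG(H_2)=\{-\tau\}$ vanishes at $\tau=1/2$. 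The homotopy over $\tau\in(0,1]$ is therefore not mere bookkeeping for well-posedness: it is what establishes, by continuation from the stable system $T_0=H_1$, that solutions exist in $L_2$ at every $\tau$, and only then does the SRG separation at $\tau=1$ translate into a bound on $\Gamma(T)$. Stability and the gain bound are obtained together along the homotopy, not by the endpoint separation alone.

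Your final paragraph does have the right shape (uniform bound along the path, continuation from $\tau=0$, incremental estimates being what the continuation needs and $\SG_0$ failing to supply), and your caution about the chord/arc exchange under M\"obius inversion is well taken --- the chord hypothesis must indeed be attached to the correct summand in $H_1^{-1}+H_2$. But neither observation repairs the main argument as written: you must run the continuation \emph{before} equating $\rmin(\SRG(T))$ with $\Gamma(T)$, whereas your text derives the gain bound first from the $\tau=1$ condition and then uses the homotopy only for existence, uniqueness, and continuity of $r\mapsto(e,y)$. As presented, the proposal would certify the paper's pitfall example as stable, so the gap is not cosmetic.
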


The requirement that $H_1$ is stable poses a severe limitation for the applicability of SRG methods to controller design, as it is impossible to stabilize an unstable plant using Theorem~\ref{thm:chaffey_thm2}. Removing this limitation is the first key result of this paper.

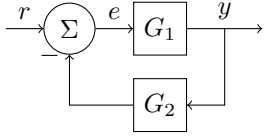
\begin{figure}[t]
    \centering

    \tikzstyle{block} = [draw, rectangle, 
    minimum height=2em, minimum width=2em]
    \tikzstyle{sum} = [draw, circle, node distance={0.5cm and 0.5cm}]
    \tikzstyle{input} = [coordinate]
    \tikzstyle{output} = [coordinate]
    \tikzstyle{pinstyle} = [pin edge={to-,thin,black}]
    
    \begin{tikzpicture}[auto, node distance = {0.3cm and 0.5cm}]
        \node [input, name=input] {};
        \node [sum, right = of input] (sum) {$\Sigma$};
        \node [block, right = of sum] (lti) {$H_1$};
        \node [coordinate, right = of lti] (z_intersection) {};
        \node [output, right = of z_intersection] (output) {}; 
        \node [block, below = of lti] (static_nl) {$H_2$};
    
        \draw [->] (input) -- node {$r$} (sum);
        \draw [->] (sum) -- node {$e$} (lti);
        \draw [->] (lti) -- node [name=z] {$y$} (output);
        \draw [->] (z) |- (static_nl);
        \draw [->] (static_nl) -| node[pos=0.99] {$-$} (sum);
    \end{tikzpicture}
    
    \caption{Block diagram of a general feedback interconnection.}
    \label{fig:chaffey_thm2}

    \vspace{-0.5em}
\end{figure}

\section{Pitfall of Stability Analysis with SRGs}\label{sec:pitfall}

\subsection{An apparent contradiction}\label{sec:contraction}


We will now argue that Theorem~\ref{thm:lti_srg} is a problematic description of the SRG of an LTI system. Consider the simple feedback setup in Fig.~\ref{fig:linear_feedback}, where $L(s) = \frac{-2}{s^2+s+1}$. Since we work with an LTI system, well-posedness of $T = (L^{-1} + 1)^{-1} : \Lte \to \Lte$ is understood. Chaffey et al. introduced Theorem~\ref{thm:chaffey_thm2} in~\cite{chaffeyGraphicalNonlinearSystem2023}, which includes a homotopy argument, precisely to deal with well-posedness of the system. Therefore, one may expect to analyze the stability of $T$ using only the SRG calculus rules from Section~\ref{sec:srg_definitions} developed by Ryu et al. in~\cite{ryuScaledRelativeGraphs2022}, i.e. without using Theorem~\ref{thm:chaffey_thm2}. However, as argued below, this is not the case.

Firstly, we analyze the system using SRG calculus. Since $L(s) = \frac{-2}{s^2+s+1}$ has poles $s=-1/2 \pm j \sqrt{3}/2$, it is stable, and the $\SRG(L)$ is obtained via Theorem~\ref{thm:lti_srg}, see Fig.~\ref{fig:srg_L1}. The SRG of the closed-loop system, is obtained by first applying Proposition~\ref{prop:srg_calculus}.\ref{eq:srg_calculus_inverse} to obtain $\SRG(L^{-1})$ in Fig.~\ref{fig:srg_L1_inv}. Then, we use Proposition~\ref{prop:srg_calculus}.\ref{eq:srg_calculus_plus_one} to obtain $\SRG(1+L^{-1})$, see Fig.~\ref{fig:srg_L1_inv_p_1}, and finally use Proposition~\ref{prop:srg_calculus}.\ref{eq:srg_calculus_inverse} again to obtain $\SRG(T)$, see Fig.~\ref{fig:srg_T1}. The radius of $\SRG(T)$, as obtained via SRG calculus, is clearly finite. This means that $T$ has finite incremental $L_2$-gain, which would show that $T$ is stable. 

\begin{figure*}[tb]
     \centering
     \begin{subfigure}[b]{0.162\linewidth}
         \centering
         \includegraphics[width=\linewidth]{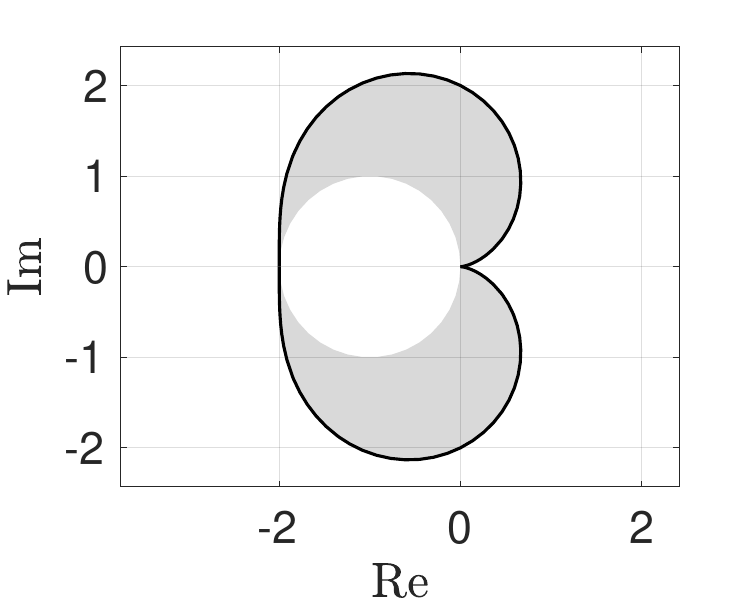}
         \caption{$\SRG(L)$.}
         \label{fig:srg_L1}
     \end{subfigure}
     \hfill
     \begin{subfigure}[b]{0.16\linewidth}
         \centering
         \includegraphics[width=\linewidth]{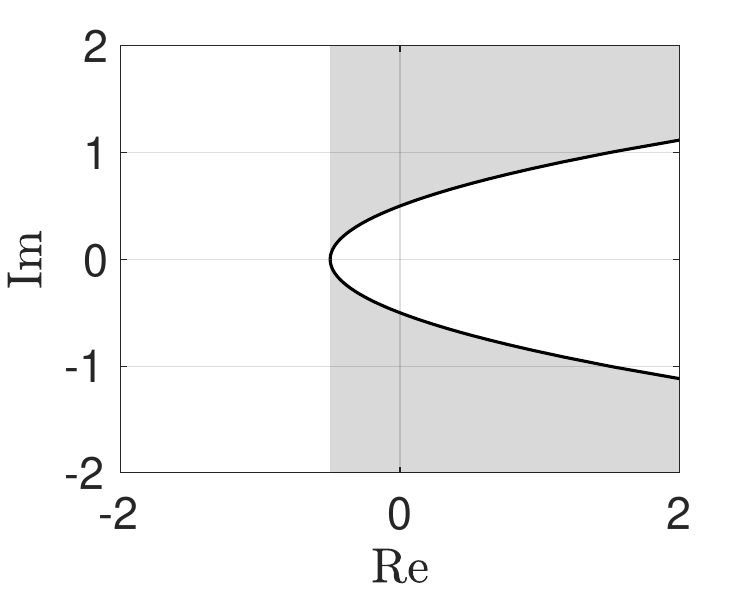}
         \caption{$\SRG(L)^{-1}$.}
         \label{fig:srg_L1_inv}
     \end{subfigure}
     \hfill
     \begin{subfigure}[b]{0.16\linewidth}
         \centering
         \includegraphics[width=\linewidth]{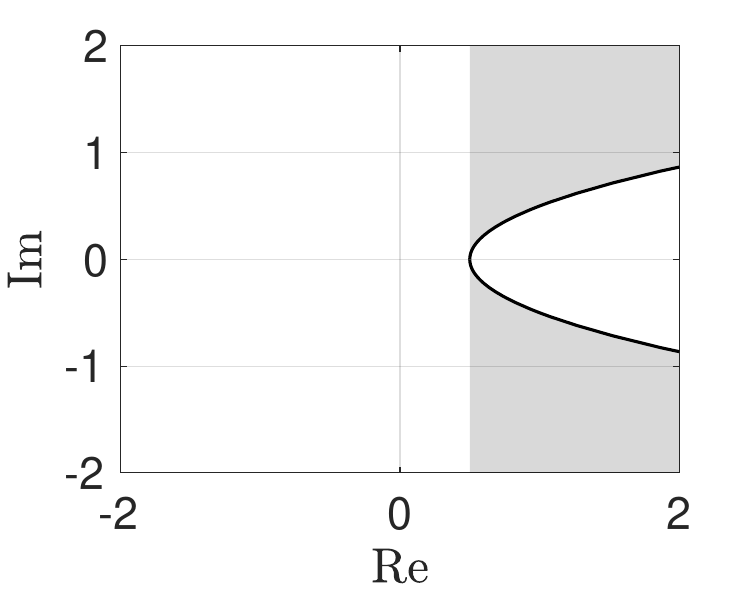}
         \caption{$1 + \SRG(L)^{-1}$.}
         \label{fig:srg_L1_inv_p_1}
     \end{subfigure}
     \hfill
     \begin{subfigure}[b]{0.163\linewidth}
         \centering
         \includegraphics[width=\linewidth]{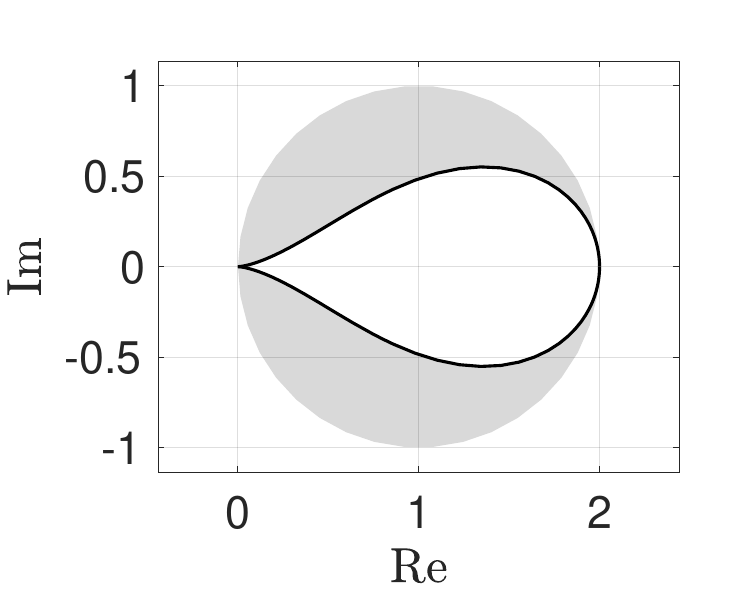}
         \caption{$\SRG_\mathcal{U}(T)$.}
         \label{fig:srg_T1}
     \end{subfigure}
     \hfill
     \begin{subfigure}[b]{0.162\linewidth}
         \centering
         \includegraphics[width=\linewidth]{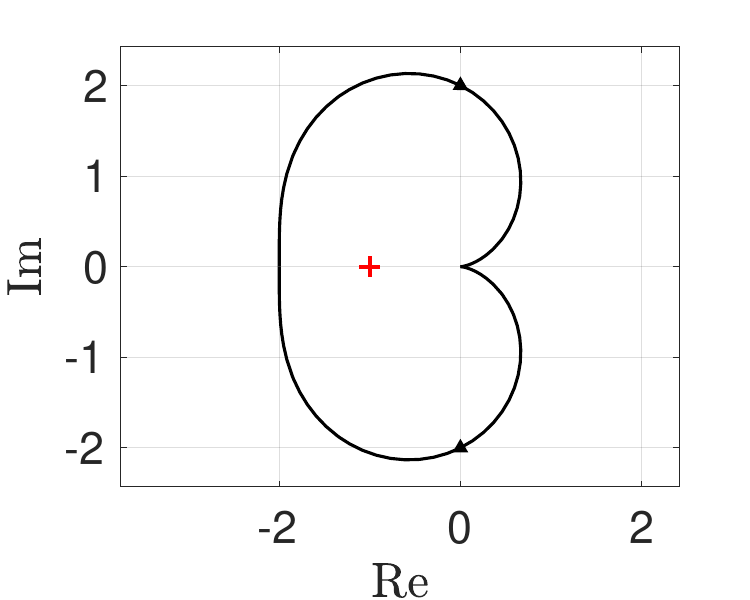}
         \caption{$\operatorname{Nyquist}(L)$.}
         \label{fig:nyquist_L1}
     \end{subfigure}
     \hfill
     \begin{subfigure}[b]{0.162\linewidth}
         \centering
         \includegraphics[width=\linewidth]{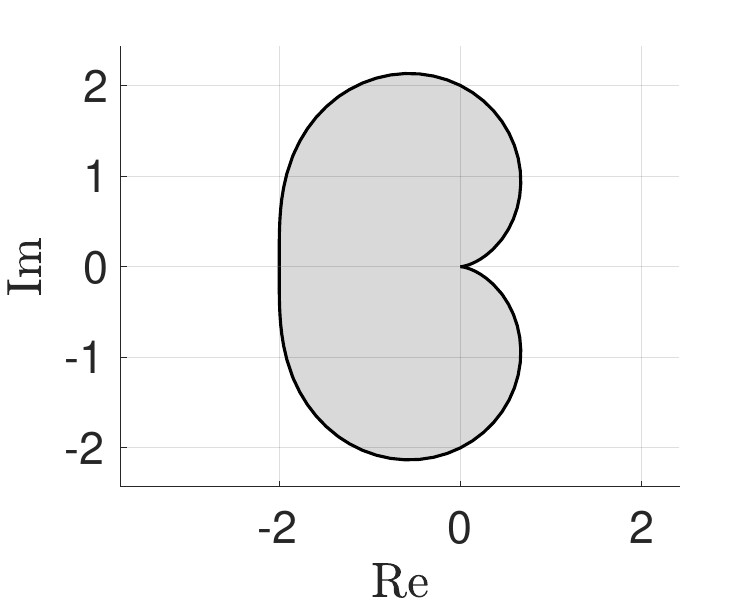}
         \caption{$\SRG'(L)$.}
         \label{fig:srg_L1_extended}
     \end{subfigure}
        \caption{SRGs and Nyquist diagram corresponding to $L(s) = \frac{-2}{s^2+s+1}$. The shaded area is the SRG and the bold line is the Nyquist diagram.}
        \label{fig:srg_analysis_L1_T1}

    \vspace{-1.2em}
\end{figure*}

However, when one uses the Nyquist stability criterion Theorem~\ref{thm:nyquist} to analyze stability, one concludes that $T(s)$ is unstable. This can be seen from the Nyquist diagram in Fig.~\ref{fig:nyquist_L1}, which encircles $-1$ one time in clockwise fashion, which results in $n_\mathrm{z} =1$ in Theorem~\ref{thm:nyquist}, indicating that $T(s)$ has one unstable pole.

It appears that we have derived a \emph{contradiction}. Nyquist theory correctly predicts instability, while using the rules of SRG calculus we arrive at a wrong result. 

This apparent dichotomy is reminiscent of the Nyquist diagram of an unstable plant. When an LTI plant is continuously transformed from stable to unstable, the Nyquist diagram only achieves infinite radius at the transition point between stable and unstable behavior. An example is $T_a(s) = 1/(s+a)$, which is stable for $a>0$, unstable for $a<0$, and achieves infinite radius only at the transition point $a=0$. For LTI systems, the fact that the radius only attains infinity at a transition point is not a problem, since we can use the Nyquist criterion to assess stability. For SRGs, however, we do not have access to this information. Before we explain the solution, we elaborate a bit more on this pitfall, and the SRG of an unstable LTI operator.

\subsection{Understanding the Pitfall}

With the current SRG formulas, the SRG radius blows becomes infinite at the transition point between stable and unstable. Therefore, direct application of SRG formulas, i.e. not using homotopy arguments that start from a stable system, can lead to \enquote{false positives} that incorrectly predict a stable system. This means that the bound for the incremental $L_2$-gain is only valid if we know a priori that the system is stable, or part of an internally stable feedback loop, analogous to the maximum gain that the Nyquist curve predicts. 

When dealing with a Nyquist curve, however, one can count encirclements of $-1$ to check stability. This circle counting is not available for the h-convex hull of a Nyquist curve, since the frequency information is thrown away, and we only have a set of complex numbers.

Now it can be understood why Theorem~\ref{thm:chaffey_thm2} requires stability to begin with, and uses a homotopy argument. It is not only to guarantee well-posedness, but also to prevent \enquote{false positive} predictions.

\subsection{The SRG of an Unstable LTI System}\label{sec:srg_of_unstable_lti}

We know the exact SRG of a stable LTI operator via Theorem~\ref{thm:lti_srg}. One could argue that this theorem can be extended to marginally stable operators, i.e., those containing integrators, using a limiting argument  where poles are represented as $\lim_{a \downarrow 0} 1/(s+a)$. However, it is unclear how to compute the SRG of an unstable LTI operator.

For an unstable LTI plant, the Nyquist diagram, or the Bode diagram for that matter, can only be interpreted as gain and phase per frequency information if the unstable plant is part of an internally stable feedback system. That is, the plant only receives signals that stabilize the plant. Denote $\mathcal{U} \subset \Lte$ the set of signals that stabilize the unstable LTI plant $L$. Then it is clear that $\SRG_\mathcal{U}(L)$ is the h-convex hull of $\operatorname{Nyquist}(L)$. This is precisely what is obtained in Fig.~\ref{fig:srg_T1}, which is $\SRG_\mathcal{U}(T)$, where $\mathcal{U}$ is the set of signals that stabilize $T$. This case is an example of where the SRG constrained to some input space is used. Here, it misleads us since we had that impression that we studied the stability on $\Lte$, whereas we actually only considered $\mathcal{U}$.

\section{Resolution of the Pitfall}\label{sec:resolution}

The fundamental reason of the pitfall reported is that the SRG disregards the information provided by the Nyquist criterion. As a resolution of this pitfall, we prove that the Nyquist criterion can be combined with the SRG, such that direct application of SRG calculus leads to consistent results.

\subsection{Main result}

Before stating our main result, we need the following definition.

\begin{definition}\label{def:extended_srg}
    Let $R$ be an LTI operator with $n_\mathrm{p}$ poles with $\mathrm{Re}(p) >0$. Denote the h-convex hull of $\operatorname{Nyquist}(R)$ as $\mathcal{G}_R$ and define
    \begin{equation}\label{eq:set_of_encircled_unstable_points}
        \mathcal{N}_R = \{ z \in \C \mid N_R(z) +n_\mathrm{p} >0 \},
    \end{equation}
    where the winding number $N_R(z) \in \Z$ denotes the amount of clockwise encirclements of $z$ by $\operatorname{Nyquist}(R)$. 
    Define the extended SRG of an LTI operator as 
    \begin{equation}\label{eq:lti_srg_redefinition}
        \SRG'(R) := \mathcal{G}_R \cup \mathcal{N}_R.
    \end{equation}
\end{definition}

To illustrate Definition~\ref{def:extended_srg}, we have plotted the extended SRG for $L(s) = \frac{-2}{s^2+s+1}$ in Fig.~\ref{fig:srg_L1_extended}. Because the \enquote{hole} in $\SRG(L)$ in Fig.~\ref{fig:srg_L1} is now filled in, one cannot derive the contradiction from Section~\ref{sec:contraction} anymore.

\begin{theorem}\label{thm:lti_srg_nyquist_extension}
    Under the condition that one of $\SRG(G)^{-1}$ or $\SRG(\phi)$ obeys the chord property and there exists some $\kappa \in \R$ such that $\kappa \in \SRG(\phi)$ and for all $0\leq \tau_1 \leq \tau_2 \leq 1$, it holds that 
    \begin{equation}\label{eq:phi_homotopy_assumption}
        \tau_1 \left( \SRG(\phi) - \kappa \right) \subseteq \tau_2 \left( \SRG(\phi) - \kappa \right).
    \end{equation}
    Then, if  
    \begin{equation}\label{eq:separation_assumption}
        \dist(\SRG'(G)^{-1}, -\SRG(\phi)) \geq r >0
    \end{equation}
    holds, the system in Fig.~\ref{fig:lure} is a stable and well-posed operator $T : \Lte \to \Lte$. Furthermore, the closed-loop operator satisfies $\Gamma(T) \leq 1/r$. If $\SG_0(\phi)$ is used instead of $\SRG(\phi)$, the well-posedness property is lost and $\gamma(T) \leq 1/r$ instead.
\end{theorem}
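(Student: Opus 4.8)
The plan is to reduce Theorem~\ref{thm:lti_srg_nyquist_extension} to the existing homotopy-based result Theorem~\ref{thm:chaffey_thm2} by constructing a continuous family of feedback interconnections that deforms the given Lur'e system into a \emph{stable} one whose well-posedness and gain bound are already understood, while keeping the separation condition~\eqref{eq:separation_assumption} intact along the entire homotopy. The key conceptual point is that $\SRG'(G)$ replaces $\SRG(G)$ precisely so that the extra region $\mathcal{N}_R$ accounts for the $n_\mathrm{p}$ unstable poles via the Nyquist winding count; so the separation from $-\SRG(\phi)$ over the \emph{extended} SRG is exactly the data needed to certify that no encirclement of the critical set occurs as we scale the nonlinearity.

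First I would set up the homotopy on the nonlinearity. Using the hypothesis that $\kappa \in \SRG(\phi)$ and the nesting property~\eqref{eq:phi_homotopy_assumption}, define $\phi_\tau := \kappa I + \tau(\phi - \kappa I)$ for $\tau \in [0,1]$, so that $\SRG(\phi_\tau) = \kappa + \tau(\SRG(\phi) - \kappa)$ and the family interpolates between the linear gain $\kappa$ at $\tau = 0$ and the full nonlinearity $\phi$ at $\tau = 1$. The nesting condition guarantees $\SRG(\phi_{\tau_1}) \subseteq \SRG(\phi_\tau)$-type containments that keep the deformed SRGs inside $\SRG(\phi)$ translated toward $\kappa$; I would then verify that the separation $\dist(\SRG'(G)^{-1}, -\SRG(\phi_\tau)) \geq r$ is preserved for every $\tau$, which should follow because shrinking $\SRG(\phi)$ toward its interior point $-\kappa$ (after negation) cannot decrease the distance to the fixed set $\SRG'(G)^{-1}$. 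The base case $\tau = 0$ is the LTI feedback interconnection of $G$ with the constant gain $\kappa$; here the separation~\eqref{eq:separation_assumption} combined with the definition of $\mathcal{N}_R$ and the Nyquist criterion (Theorem~\ref{thm:nyquist}) must be shown to imply closed-loop stability and well-posedness, i.e. $n_\mathrm{z} = 0$.

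The heart of the argument is translating the set-theoretic separation condition over $\SRG'(G)$ into the encirclement count of Theorem~\ref{thm:nyquist}. I would argue that $\dist(\SRG'(G)^{-1}, \{\kappa\}) \geq r > 0$ forces $-1/\kappa \notin \SRG'(G) = \mathcal{G}_G \cup \mathcal{N}_G$; since $-1/\kappa \notin \mathcal{N}_G$ means $N_G(-1/\kappa) + n_\mathrm{p} \leq 0$, and $N_G(-1/\kappa) \geq -n_\mathrm{p}$ always holds because the Nyquist contour of a proper rational function cannot wind around a finite point more than $n_\mathrm{p}$ times counterclockwise, we would pin down $N_G(-1/\kappa) = -n_\mathrm{p}$ exactly. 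By the scaling relations of Proposition~\ref{prop:srg_calculus} this is the statement that the Nyquist plot of $\kappa G$ encircles $-1$ a net $n_\mathrm{p}$ times counterclockwise, so Theorem~\ref{thm:nyquist} yields $n_\mathrm{z} = 0$ and the base interconnection is internally stable and well-posed. With a stable, well-posed base point established, I would invoke Theorem~\ref{thm:chaffey_thm2} along the homotopy $\tau \mapsto \phi_\tau$ — treating $H_1 = G$ (now effectively operating on the stabilizing input space so that $\SRG'(G)^{-1}$ legitimately plays the role of $\SRG(G^{-1})$) and $H_2 = \phi_\tau$ — to propagate well-posedness and the incremental gain bound $\Gamma(T) \leq 1/r$ from $\tau = 0$ up to $\tau = 1$. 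The non-incremental statement follows by replacing $\SRG(\phi)$ with $\SG_0(\phi)$ throughout, noting that the homotopy invariant controlling well-posedness is unavailable in that setting, exactly as in Theorem~\ref{thm:chaffey_thm2}.

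The main obstacle I anticipate is rigorously justifying that $\SRG'(G)^{-1}$ may be substituted for $\SRG(G^{-1})$ inside the homotopy machinery of Theorem~\ref{thm:chaffey_thm2}, since that theorem was stated for \emph{stable} $H_1$ with a genuine operator SRG, whereas here $G$ is unstable and $\SRG'(G)$ is a \emph{redefined}, Nyquist-augmented object rather than a literal operator SRG. The crux is showing that once internal stability has been secured at $\tau = 0$, the closed loop confines the plant input to the stabilizing subspace $\mathcal{U}$ on which (per Section~\ref{sec:srg_of_unstable_lti}) the effective SRG of $G$ coincides with the h-convex hull of its Nyquist diagram, so that the separation over $\SRG'(G)$ delivers precisely the inequality $\dist(\SRG(H_1)^{-1}, -\tau\SRG(H_2)) \geq r$ required by Theorem~\ref{thm:chaffey_thm2}. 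Making this restriction-to-$\mathcal{U}$ argument airtight, and confirming that the filled-in region $\mathcal{N}_G$ never gets crossed by $-\SRG(\phi_\tau)$ along the deformation, is where the real work lies.
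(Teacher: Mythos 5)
There is a genuine gap, and you have in fact located it yourself: your entire argument funnels into applying Theorem~\ref{thm:chaffey_thm2} with $H_1 = G$, but that theorem requires $\Gamma(H_1) < \infty$, which fails for unstable $G$. Your proposed repair --- restricting the plant input to the stabilizing subspace $\mathcal{U}$ so that $\SRG'(G)^{-1}$ ``legitimately plays the role of'' $\SRG(G^{-1})$ --- is circular: you need closed-loop stability on all of $\Lte$ to justify that inputs stay in $\mathcal{U}$, yet that stability is precisely what you are trying to prove. Indeed, the paper's Section~\ref{sec:pitfall} identifies exactly this move (silently working with $\SRG_{\mathcal{U}}$ while believing one has analyzed $\Lte$) as the source of the false-positive pitfall, so no amount of polishing makes this step airtight within the stated tools; Theorem~\ref{thm:chaffey_thm2}'s homotopy is internal to its proof and cannot be ``propagated from $\tau=0$'' around an unstable $H_1$ without proving a new homotopy theorem from scratch.

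The missing idea is a \emph{loop transformation}: shift the real gain $\kappa$ from the nonlinearity into the LTI block, setting $\tilde{G} = G/(1+\kappa G)$ and $\tilde{\phi} = \phi - \kappa$, so that $T = (G^{-1}+\phi)^{-1} = (\tilde{G}^{-1}+\tilde{\phi})^{-1}$ as operators. Your base-case Nyquist argument is then exactly what is needed, but for a different purpose: the separation condition~\eqref{eq:separation_assumption} gives $-1/\kappa \notin \SRG'(G)$, so by the Nyquist criterion $\tilde{G}$ is a \emph{stable} LTI operator with $\Gamma(\tilde{G}) < \infty$, and its SRG is a genuine operator SRG --- no appeal to $\mathcal{U}$ is needed. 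Since $\SRG(\tilde{\phi}) = \SRG(\phi) - \kappa$ and $\SRG(\tilde{G}^{-1}) = \SRG(G^{-1}) + \kappa$, the separation~\eqref{eq:separation_assumption} transfers to $\dist(\SRG(\tilde{G})^{-1}, -\SRG(\tilde{\phi})) \geq r$, and your nesting observation from~\eqref{eq:phi_homotopy_assumption} gives $\tau\SRG(\tilde{\phi}) \subseteq \SRG(\tilde{\phi})$ for all $\tau \in [0,1]$, which is precisely the uniform-in-$\tau$ hypothesis of Theorem~\ref{thm:chaffey_thm2} applied verbatim with $H_1 = \tilde{G}$, $H_2 = \tilde{\phi}$. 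Note also that your homotopy $\phi_\tau = \kappa + \tau(\phi-\kappa)$ is the right deformation but deployed at the wrong stage: after the loop shift it becomes the standard $\tau$-scaling $\tau\tilde{\phi}$ already built into Theorem~\ref{thm:chaffey_thm2}, so no separate base-case-plus-continuation argument is required at all.
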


\begin{proof}
    Fix such a real $\kappa \in \SRG(\phi)$ and apply a loop transformation to obtain $\tilde{G} = \frac{G}{1+ \kappa G}$ and $\tilde{\phi} = \phi - \kappa$, such that $T = (G^{-1} + \phi)^{-1} = (\tilde{G}^{-1} + \tilde{\phi})^{-1}$. By Eq.~\eqref{eq:separation_assumption}, $-1/\kappa \notin \SRG'(G)$, hence $\tilde{G}$ is stable according to the Nyquist criterion, therefore, $\Gamma(\tilde{G}) < \infty$.

    Since $\SRG(\tilde{\phi}) = \SRG(\phi) - \kappa$ and $\SRG(\tilde{G}^{-1}) = \SRG(G^{-1}) + \kappa$, Eq.~\eqref{eq:separation_assumption} implies that $\dist(\SRG(\tilde{G})^{-1}, -\SRG(\tilde{\phi})) \geq r$. By Eq.~\eqref{eq:phi_homotopy_assumption}, we know that $\tau \SRG(\tilde{\phi}) \subset \SRG(\tilde{\phi})$ for all $\tau \in [0,1]$, hence $\dist(\SRG(\tilde{G})^{-1}, -\tau \SRG(\tilde{\phi})) \geq r$ for all $\tau \in [0,1]$. 
    
    The rest follows from Theorem~\ref{thm:chaffey_thm2} by taking $H_1 = \tilde{G}$ and $H_2 = \tilde{\phi}$.

\end{proof}

\begin{remark}
    If the condition Eq.~\eqref{eq:phi_homotopy_assumption} does not hold, then Theorem~\ref{thm:lti_srg_nyquist_extension} still holds if $\dist(\SRG'(G)^{-1}, -\tau \SRG(\phi)) \geq r >0$ for all $\tau \in [0,1]$.
\end{remark}

\begin{remark}
    Since only real elements of $\mathcal{N}_G$ in Eq.~\eqref{eq:set_of_encircled_unstable_points} are used in the proof, when invoking the Nyquist criterion, one could consider a smaller extended SRG. Instead of adding $\mathcal{N}_R$ to $\mathcal{G}_R$ in Eq.~\eqref{eq:lti_srg_redefinition}, one could add only $\mathcal{N}_R \cap \R$. This is, however, less appealing and intuitive for graphical analysis of stability via the separation condition Eq.~\eqref{eq:separation_assumption}.
\end{remark}

\begin{remark}
    Theorem~\ref{thm:lti_srg_nyquist_extension} provides a rigorous way of doing SRG computations with plants that have integrators, as opposed to the heuristic method outlined in Section~\ref{sec:srg_of_unstable_lti}.
\end{remark}

\begin{figure*}[t]
    \begin{subfigure}[b]{0.13\linewidth}
        \centering
        \begin{tikzpicture}[scale=0.7]
            \definecolor{myblue}{rgb}{0, 0.1, 0.4}
            \def\axissize{1.5}

            \draw[->] (-\axissize,0) -- (\axissize,0) node[anchor=west] { $x$};
            \draw[->] (0,-\axissize) -- (0,\axissize) node[anchor=east] { $\phi(x)$};

            \draw[-]  (0.5,-0.1) -- (0.5,0.1);

            \draw[-]  (-0.5,-0.1) -- (-0.5,0.1);

            \node at (0.5,-0.25) {\tiny $1$};
            \node at (-0.5,-0.25) {\tiny $-1$};

            \draw[very thick,myblue] (-1, -1.5) -- (-0.5, -0.5) -- (0.5, 0.5) -- (1, 1.5);
        \end{tikzpicture}
        \caption{}
        \label{fig:example_phi}
    \end{subfigure}
    \hfill
    \begin{subfigure}[b]{0.21\linewidth}
        \centering
        \includegraphics[width=\linewidth]{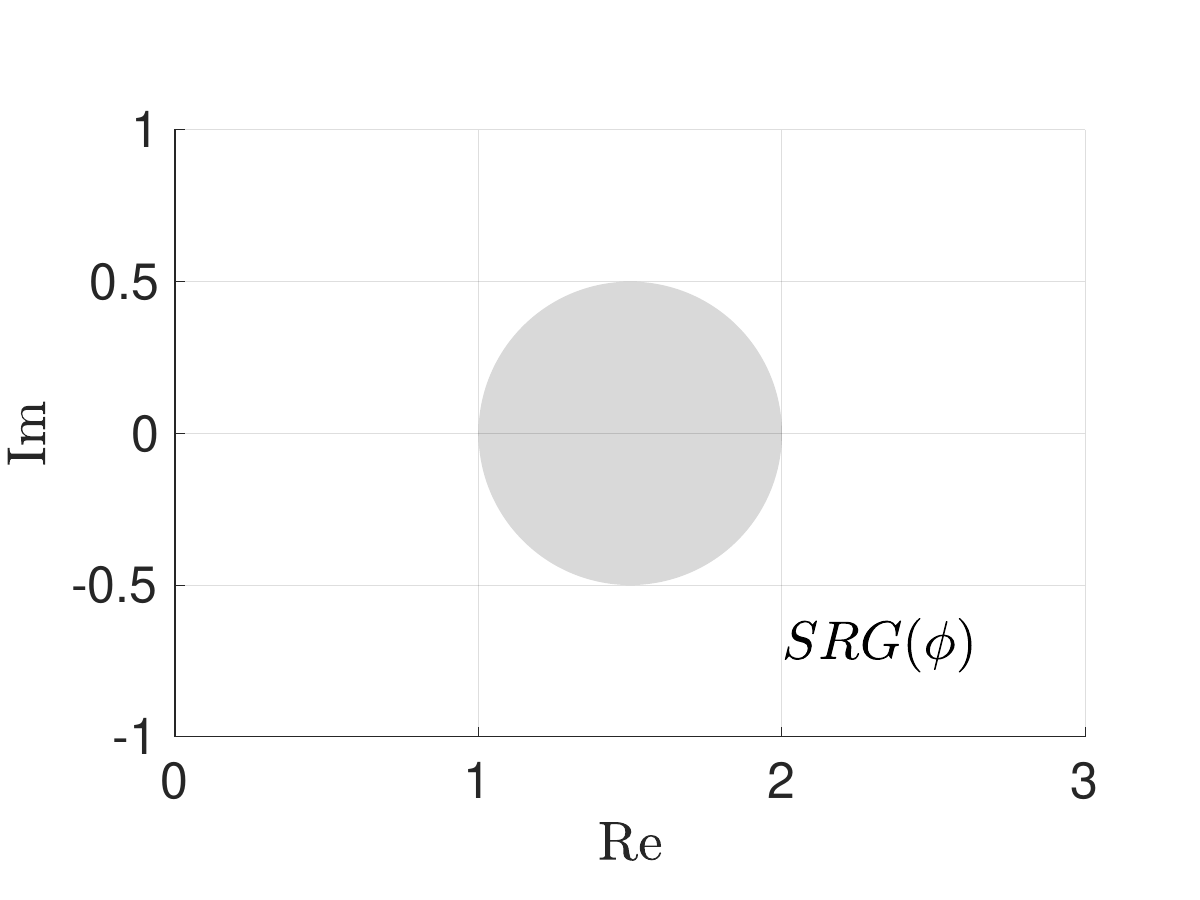}
        \caption{}
        \label{fig:example_srg_phi}
    \end{subfigure}
    \hfill
    \begin{subfigure}[b]{0.2\linewidth}
        \centering
        \includegraphics[width=\linewidth]{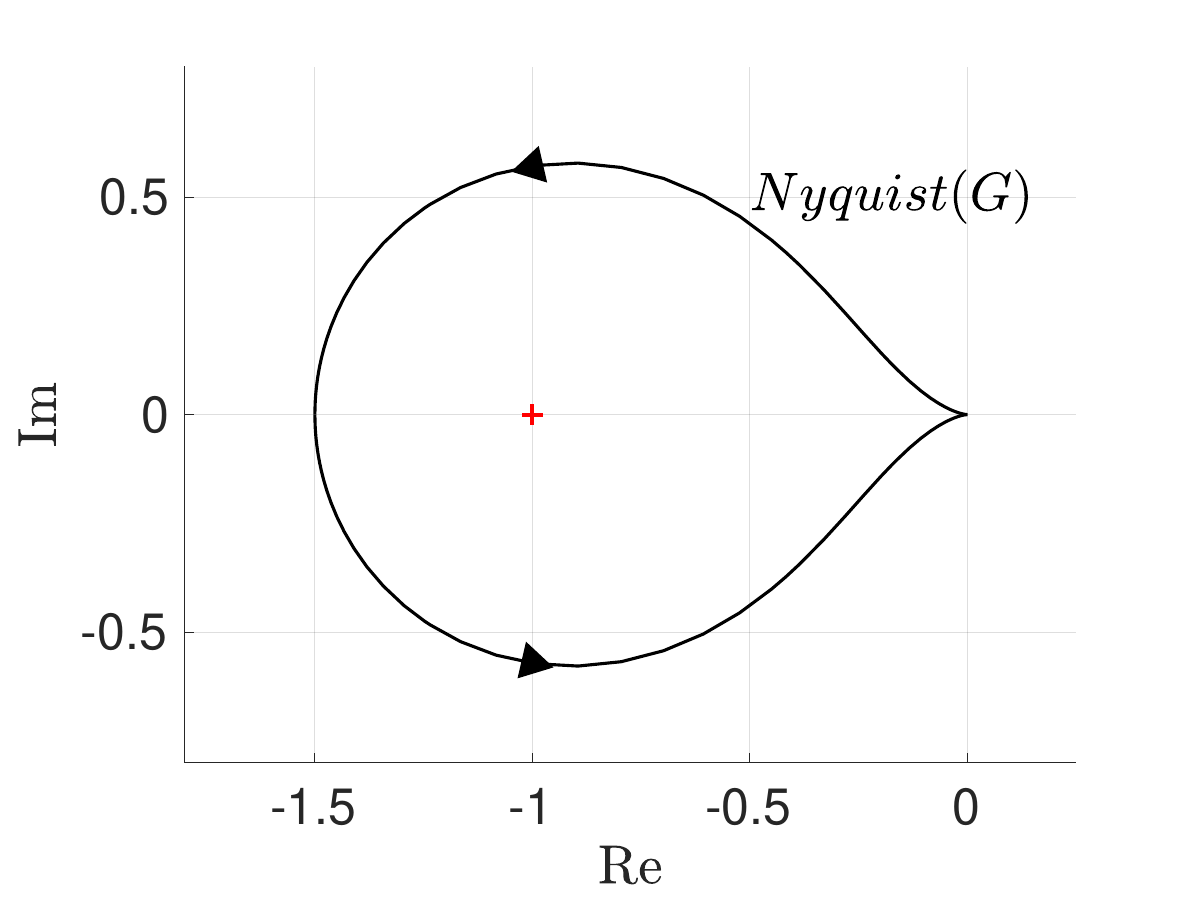}
        \caption{}
        \label{fig:example_nyquist_G}
    \end{subfigure}
    \hfill
    \begin{subfigure}[b]{0.2\linewidth}
        \centering
        \includegraphics[width=\linewidth]{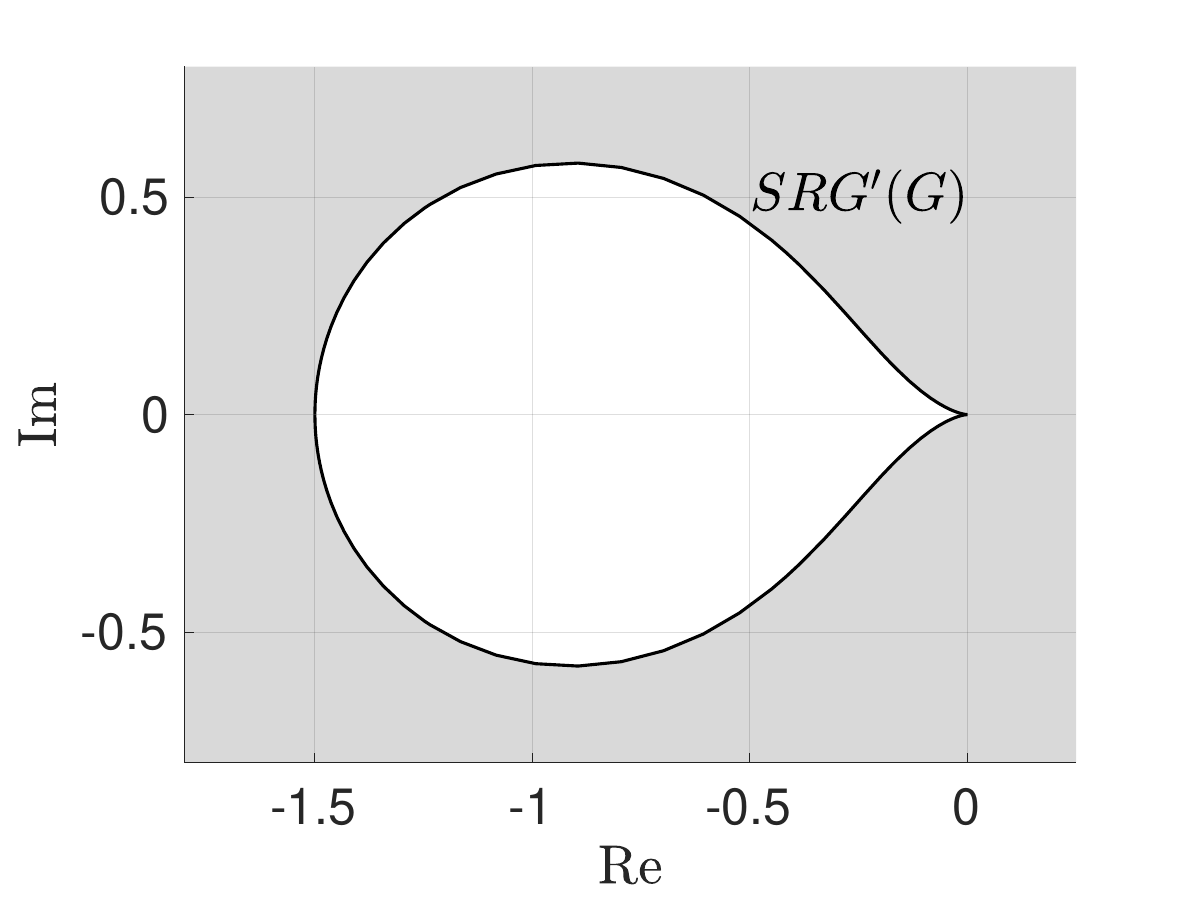}
        \caption{}
        \label{fig:example_srg_G}
    \end{subfigure}
    \hfill
    \begin{subfigure}[b]{0.21\linewidth}
        \centering
        \includegraphics[width=\linewidth]{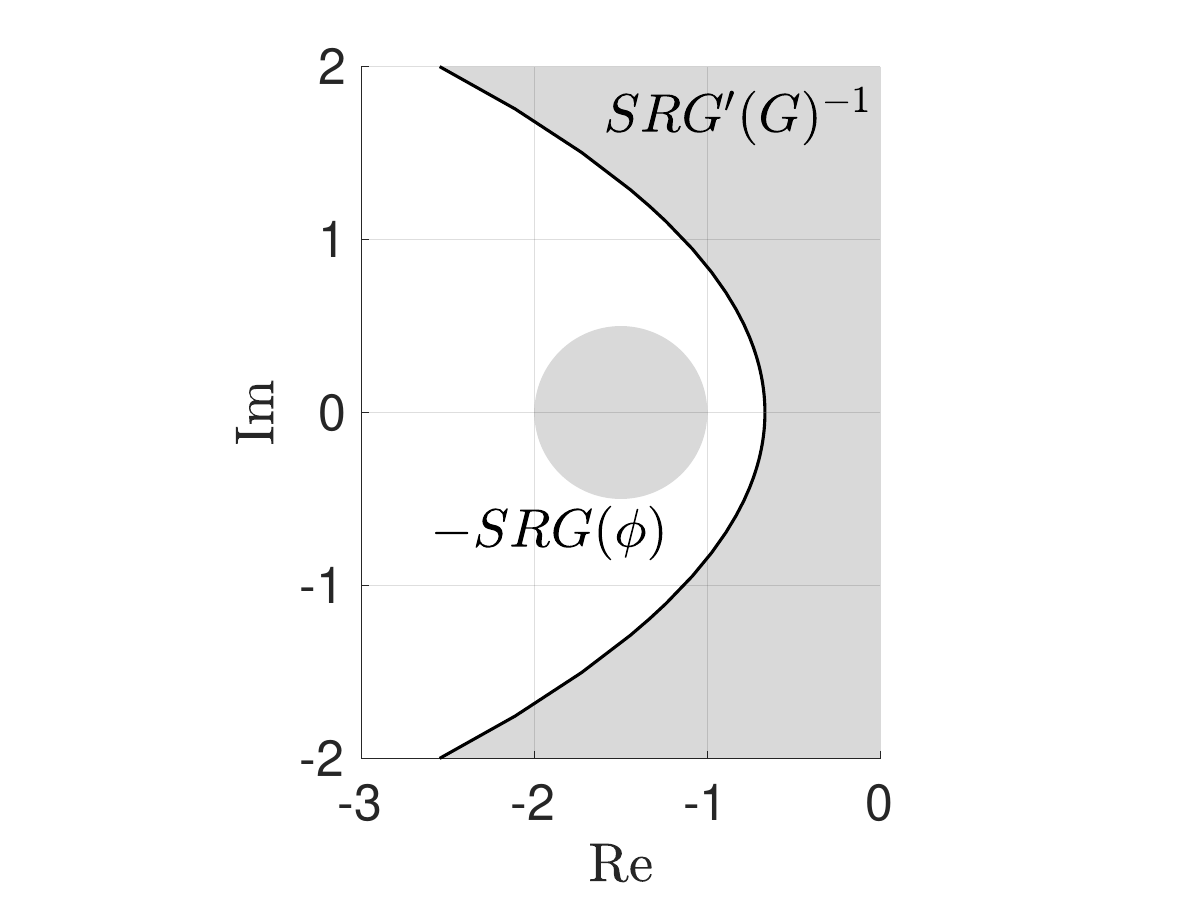}
        \caption{}
        \label{fig:example_srg_distance}
    \end{subfigure}
    \caption{Figures for the SRG analysis of the example in Section~\ref{sec:examples}: (a) graph of the nonlinearity $\phi$, (b) $\SRG(\phi)$, (c) Nyquist diagram of $G$, (d) $\SRG'(G)$, (e) visualization of the separation condition in Eq.~\eqref{eq:separation_assumption}.}
    \label{fig:example}
    \vspace{-1.em}
\end{figure*}

\subsection{The Generalized Circle Criterion}\label{sec:compare_circle_criterion}

Theorem~\ref{thm:chaffey_thm2} can only reproduce the circle criterion in the case that $G(s)$ in Fig.~\ref{fig:lure} is stable. We will now show that using our result, Theorem~\ref{thm:lti_srg_nyquist_extension}, one can prove a result that is more general than the celebrated circle criterion, Theorem~\ref{thm:circle}. We emphasize that this is possible since we combined the information of the Nyquist criterion into the SRG. 

\begin{theorem}\label{thm:generalized_circle_criterion}
    Let $G(s)$ be an LTI operator and $\phi$ a nonlinear operator, where at least one of $\SRG(G)^{-1}$ or $\SG_0(\phi)$ obeys the chord property, and $\SG_0(\phi)$ obeys Eq.~\eqref{eq:phi_homotopy_assumption}. The system in Fig.~\ref{fig:lure} satisfies $r \in \Lte \implies y \in \Lte$ if 
    \begin{equation}\label{eq:circle_criterion_srg}
        \dist(\SRG'(G),-\SG_0(\phi)^{-1}) >0.
    \end{equation}
    Furthermore, we have the $L_2$-gain bound $\gamma(T) \leq 1/r_m$, where $\dist(\SRG'(G)^{-1},-\SG_0(\phi)) \geq r_m>0$.
    
    One can replace the $L_2$-gain with the incremental $L_2$-gain by taking $\SRG(\phi)$ instead of $\SG_0(\phi)$ in~\eqref{eq:circle_criterion_srg}.
\end{theorem}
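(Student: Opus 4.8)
The plan is to reduce the statement to the already-established main result, Theorem~\ref{thm:lti_srg_nyquist_extension}, and through it to the homotopy theorem, Theorem~\ref{thm:chaffey_thm2}, by identifying the correct operators and matching the two separation conditions via M\"obius inversion. The theorem packages three claims: the qualitative statement $r\in\Lte\implies y\in\Lte$ under Eq.~\eqref{eq:circle_criterion_srg}; the non-incremental gain bound $\gamma(T)\le 1/r_m$ under $\dist(\SRG'(G)^{-1},-\SG_0(\phi))\ge r_m>0$; and the incremental variant obtained by replacing $\SG_0(\phi)$ with $\SRG(\phi)$. The gain bounds are essentially immediate: the hypothesis $\dist(\SRG'(G)^{-1},-\SG_0(\phi))\ge r_m>0$ is exactly the separation assumption Eq.~\eqref{eq:separation_assumption} of Theorem~\ref{thm:lti_srg_nyquist_extension} with $\SRG(\phi)$ replaced by $\SG_0(\phi)$, so the non-incremental branch of that theorem gives $\gamma(T)\le 1/r_m$, and its incremental branch gives $\Gamma(T)\le 1/r_m$ when $\SRG(\phi)$ is used. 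The chord and homotopy hypotheses carry over verbatim.

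The genuinely new content is the qualitative stability claim in the \emph{direct} form Eq.~\eqref{eq:circle_criterion_srg}, $\dist(\SRG'(G),-\SG_0(\phi)^{-1})>0$, which compares the extended graph of $G$ against the inverted nonlinearity set, mirroring the classical requirement that the Nyquist curve avoid the critical disk $D_{[-1/k_1,-1/k_2]}$. To exploit it, I would perform the same loop transformation as in the proof of Theorem~\ref{thm:lti_srg_nyquist_extension}: pick a real $\kappa\in\SG_0(\phi)$ supplied by the homotopy hypothesis and set $\tilde G=G/(1+\kappa G)$, $\tilde\phi=\phi-\kappa$. Since $\kappa\in\SG_0(\phi)$ real gives $-1/\kappa\in -\SG_0(\phi)^{-1}$, Eq.~\eqref{eq:circle_criterion_srg} forces $-1/\kappa\notin\SRG'(G)$, so the Nyquist criterion through Definition~\ref{def:extended_srg} certifies that $\tilde G$ is stable with $\Gamma(\tilde G)<\infty$, exactly as in Theorem~\ref{thm:lti_srg_nyquist_extension}.

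The crux is then to convert the direct separation Eq.~\eqref{eq:circle_criterion_srg} into the inverse-form separation required by Theorem~\ref{thm:chaffey_thm2}, namely $\dist(\SRG(\tilde G)^{-1},-\tau\SG_0(\tilde\phi))\ge r_m>0$ for all $\tau\in(0,1]$. Here I would use that $\SRG'(G)$ and $\SG_0(\phi)$ are symmetric about $\R$, so the M\"obius inversion $re^{j\phi}\mapsto(1/r)e^{j\phi}$ coincides with the ordinary reciprocal on these sets; consequently $\SRG'(G)$ and $(-\SG_0(\phi))^{-1}$ meet at a finite nonzero point if and only if their reciprocals $\SRG'(G)^{-1}$ and $-\SG_0(\phi)$ do. The homotopy hypothesis Eq.~\eqref{eq:phi_homotopy_assumption} then collapses the family $-\tau\SG_0(\tilde\phi)$ into $-\SG_0(\tilde\phi)$, as in Theorem~\ref{thm:lti_srg_nyquist_extension}, reducing the $\tau$-uniform requirement to a single disjointness statement.

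The hardest part, deserving the most care, is upgrading \emph{disjointness} to a strictly positive distance after inversion while handling the images of $0$ and $\infty$. By Definition~\ref{def:extended_srg}, $\SRG'(G)=\mathcal G_G\cup\mathcal N_G$, and for an unstable plant the forbidden region $\mathcal N_G$ is typically unbounded, so its reciprocal accumulates at the origin; since reciprocation is not uniformly continuous there, disjoint closed sets can acquire zero distance under inversion. I would control this by splitting the estimate into a bounded part, on which reciprocation is Lipschitz and positive distance is preserved, and a neighborhood of infinity, handled through the $\C\cup\{\infty\}$ conventions for $\dist$ fixed in the notation (with $|\infty-\infty|:=0$) together with the boundedness of $\operatorname{Nyquist}(G)$ on the imaginary axis. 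Assembling these pieces yields the inverse-form separation with some $r_m>0$, after which Theorem~\ref{thm:chaffey_thm2} applied to $H_1=\tilde G$ and $H_2=\tilde\phi$ closes the loop and delivers both boundedness and the stated gain bounds.
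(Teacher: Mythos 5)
Your proposal is correct and follows essentially the same route as the paper: the paper's proof is a one-liner that reduces the statement to Theorem~\ref{thm:lti_srg_nyquist_extension} (and through it to Theorem~\ref{thm:chaffey_thm2}) by asserting that the direct-form condition Eq.~\eqref{eq:circle_criterion_srg} is equivalent to the inverse-form separation $\dist(\SRG'(G)^{-1},-\SG_0(\phi))\ge r_m>0$, which is exactly your reduction. The one difference is that you spend effort justifying that equivalence (symmetry of the sets about $\R$, and the failure of reciprocation to preserve positive distance near $0$ and $\infty$, handled via a bounded/unbounded splitting); the paper simply asserts it, so your treatment is, if anything, more careful on precisely the step the paper leaves implicit.
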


\begin{proof}
    The theorem is a direct result of Theorem~\ref{thm:lti_srg_nyquist_extension} upon realizing that Eq.~\eqref{eq:circle_criterion_srg} is equivalent to $\dist(\SRG'(G)^{-1},-\SG_0(\phi)) \geq r_m$ for some $r_m>0$.
\end{proof}


Note that Theorem~\ref{thm:generalized_circle_criterion} is equivalent to the circle criterion in Theorem~\ref{thm:circle} when $\phi \in [k_1, k_2]$. However, it is more general than the circle criterion since $\phi$ can be any operator, not necessarily sector bounded. Guiver et al.~\cite{guiverCircleCriterionClass2022} derived a circle criterion for sector bounded dynamic operators, however their method provides no graphical tools to check stability. Additionally, SRG analysis provides a bound on the (incremental) $L_2$-gain of the system, whereas the classical circle criterion and Ref.~\cite{guiverCircleCriterionClass2022} only guarantee boundedness.

\begin{remark}
One might worry that the h-convex hull of the Nyquist diagram will make the SRG analysis more conservative than the circle criterion. We will argue here that this is not the case. Recall that $D_{[k_1, k_2]}$ is a disk centered on the real line, hence so is $D_{[-1/k_1, -1/k_2]} =: D_\phi$. Suppose that the h-convex hull does make the SRG analysis more conservative, i.e.
\begin{equation}\label{eq:nyquist_hconvhull_contradiction}
\begin{aligned}
    &D_\phi \cap \operatorname{Nyquist}(G) = \emptyset, \\
    &D_\phi \cap \SRG(G) \neq \emptyset,
\end{aligned}
\end{equation}
are both true, which means that the circle criterion would predict stability, but the SRG method would be conservative and is not able to predict stability. Note that we use the SRG for $G$ as defined in Theorem~\ref{thm:lti_srg}, so we only consider the h-convex hull part. If~\eqref{eq:nyquist_hconvhull_contradiction} would be true, then there exist $z_1, z_2 \in \operatorname{Nyquist}(G) \cap \C_{\mathrm{Im} >0}$ such that $z_1, z_2 \notin D_\phi$, but $\operatorname{Arc}_\text{min}(z_1, z_2) \cap D_\phi \neq \emptyset$. This is only possible if either $z_1$ or $z_2$ lies in $D_\phi$, since $D_\phi$ is already h-convex. This contradicts~\eqref{eq:nyquist_hconvhull_contradiction}, and therefore the statement is false and we can conclude that the h-convex hull does not make the SRG analysis of the Lur'e system more conservative than the circle criterion.
\end{remark}

\begin{remark}
    Since $G$ is an LTI operator, it is known that $\SRG(G) = \SG_0(G)$, see Ref.~\cite{chaffeyGraphicalNonlinearSystem2023}. Therefore, extending $\SG_0(G)$ with the Nyquist criterion, as in Eq.~\eqref{eq:lti_srg_redefinition} would yield precisely $\SRG'(G)$. For this reason, there is no conservatism introduced by using $\SRG'(G)$ in calculations for non-incremental gain.
\end{remark}

\section{Example: Plant with Unstable Pole}\label{sec:examples}

Consider the Lur'e system in Fig.~\ref{fig:lure} with $G(s) = \frac{3}{(s-2)(s/10+1)}$
\begin{equation*}
    \phi(x) = \begin{cases}
        2x+1 &\text{ if } x<-1, \\
        x &\text{ if } |x| \leq 1, \\
        2x-1 &\text{ if } x>1,
    \end{cases}
\end{equation*}
which is visualized in Fig.~\ref{fig:example_phi}. We are interested in the incremental gain of $T = (G^{-1} + \phi)^{-1}$. For this, we will apply Theorem~\ref{thm:lti_srg_nyquist_extension}. 
While Theorem~\ref{thm:circle} (incremental circle criterion) can be used, it only ensures $L_2$ stability without a gain bound, unlike Theorem~\ref{thm:lti_srg_nyquist_extension}, which provides an incremental $L_2$-gain bound.

Since $\phi$ satisfies the incremental sector condition $\partial \phi \in [1,2]$, we know that $\SRG(\phi) = D_{[1,2]}$, see Fig.~\ref{fig:example_srg_phi}, which obeys the chord property. Moreover, by picking any $\kappa \in [1,2]$, e.g. $\kappa = 1.5 \in \SRG(\phi)$, it is clear that condition~\eqref{eq:phi_homotopy_assumption} holds. To compute $\SRG'(G)$, we start by computing the Nyquist diagram, see Fig.~\ref{fig:example_nyquist_G}. From Fig.~\ref{fig:example_nyquist_G}, it is clear that $N_G(z) + n_\mathrm{p} =0$ in Eq.~\eqref{eq:set_of_encircled_unstable_points} for all points \enquote{inside} the Nyquist diagram, where $n_\mathrm{p}=1$ is the amount of unstable poles. By adding $\mathcal{N}_G$ and $\mathcal{G}_G$, the h-convex hull of the Nyquist diagram, to Fig.~\ref{fig:example_nyquist_G}, one obtains $\SRG'(G)$, shown in Fig.~\ref{fig:example_srg_G}.

To compute the distance in Eq.~\eqref{eq:separation_assumption}, we must use Propostion~\ref{prop:srg_calculus}.\ref{eq:srg_calculus_inverse} to \emph{invert} $\SRG'(G)$. This gives rise to Fig.~\ref{fig:example_srg_distance}, where we can read off that 
\begin{equation*}
    \dist(\SRG'(G)^{-1}, -\SRG(\phi)) = 1/4.
\end{equation*}
Now that we satisfy all the conditions for Theorem~\ref{thm:lti_srg_nyquist_extension}, we can conclude that $T = (G^{-1} + \phi)^{-1}$ is a well-posed operator $T: \Lte \to \Lte$ (provided that $T$ is causal) with incremental gain bound $\Gamma(T) \leq 4$.

\section{Conclusion and Outlook}\label{sec:conclusion}

We have shown how to combine the Nyquist criterion with the SRG for LTI operators in order to use SRG analysis with unstable LTI plants. One immediate result is the generalized circle criterion, and we have demonstrated its application to an example. We focused on the Lur'e setup beyond sector bounded nonlinearities, for which we also guarantee well-posedness in the incremental case. Topic of further research is to generalize Theorem~\ref{thm:lti_srg_nyquist_extension} to more general feedback interconnections, while preserving well-posedness, to fully overcome the limitation of current SRG methods.

\bibliographystyle{IEEEtran} 
\bibliography{bibliography} 

\begin{thebibliography}{10}
\providecommand{\url}[1]{#1}
\csname url@samestyle\endcsname
\providecommand{\newblock}{\relax}
\providecommand{\bibinfo}[2]{#2}
\providecommand{\BIBentrySTDinterwordspacing}{\spaceskip=0pt\relax}
\providecommand{\BIBentryALTinterwordstretchfactor}{4}
\providecommand{\BIBentryALTinterwordspacing}{\spaceskip=\fontdimen2\font plus
\BIBentryALTinterwordstretchfactor\fontdimen3\font minus \fontdimen4\font\relax}
\providecommand{\BIBforeignlanguage}[2]{{%
\expandafter\ifx\csname l@#1\endcsname\relax
\typeout{** WARNING: IEEEtran.bst: No hyphenation pattern has been}%
\typeout{** loaded for the language `#1'. Using the pattern for}%
\typeout{** the default language instead.}%
\else
\language=\csname l@#1\endcsname
\fi
#2}}
\providecommand{\BIBdecl}{\relax}
\BIBdecl

\bibitem{nyquistRegenerationTheory1932}
H.~Nyquist, ``Regeneration {{Theory}},'' \emph{Bell System Technical Journal}, vol.~11, no.~1, pp. 126--147, Jan. 1932.

\bibitem{sandbergFrequencyDomainConditionStability1964}
I.~W. Sandberg, ``A {{Frequency-Domain Condition}} for the {{Stability}} of {{Feedback Systems Containing}} a {{Single Time-Varying Nonlinear Element}},'' \emph{Bell System Technical Journal}, vol.~43, no.~4, pp. 1601--1608, Jul. 1964.

\bibitem{krylovIntroductionNonlinearMechanics1947}
N.~M. Krylov, N.~N. Bogoliubov, and S.~Lefschetz, \emph{Introduction to Non-Linear Mechanics}, ser. Annals of Mathematics Studies; {{Annals}} of Mathematics Studies; {{Annals}} of Mathematics Studies; No. 11; {{No}}. 11.\hskip 1em plus 0.5em minus 0.4em\relax Princeton: Princeton University Press, 1947.

\bibitem{pavlovFrequencyDomainPerformance2007}
A.~Pavlov, N.~Van De~Wouw, A.~Pogromsky, M.~Heertjes, and H.~Nijmeijer, ``Frequency domain performance analysis of nonlinearly controlled motion systems,'' in \emph{Proc. of the 46th {{IEEE Conference}} on {{Decision}} and {{Control}}}, 2007, pp. 1621--1627.

\bibitem{ryuScaledRelativeGraphs2022}
E.~K. Ryu, R.~Hannah, and W.~Yin, ``Scaled relative graphs: Nonexpansive operators via {{2D Euclidean}} geometry,'' \emph{Mathematical Programming}, vol. 194, no. 1-2, pp. 569--619, Jul. 2022.

\bibitem{chaffeyGraphicalNonlinearSystem2023}
T.~Chaffey, F.~Forni, and R.~Sepulchre, ``Graphical {{Nonlinear System Analysis}},'' \emph{IEEE Transactions on Automatic Control}, vol.~68, no.~10, pp. 6067--6081, Oct. 2023.

\bibitem{vandeneijndenScaledGraphsReset2024}
S.~Van Den~Eijnden, T.~Chaffey, T.~Oomen, and W.~(Maurice)~Heemels, ``Scaled graphs for reset control system analysis,'' \emph{European Journal of Control}, p. 101050, Jun. 2024.

\bibitem{krebbekxNonlinearBandwidthBode2025}
J.~P.~J. Krebbekx, R.~T{\'o}th, and A.~Das, ``Nonlinear {{Bandwidth}} and {{Bode Diagrams}} based on {{Scaled Relative Graphs}}.''\hskip 1em plus 0.5em minus 0.4em\relax arXiv:2504.01585, 2025.

\bibitem{megretskiSystemAnalysisIntegral1997}
A.~Megretski and A.~Rantzer, ``System analysis via integral quadratic constraints,'' \emph{IEEE Transactions on Automatic Control}, vol.~42, no.~6, pp. 819--830, Jun. 1997.

\bibitem{chaffeyHomotopyTheoremIncremental2024}
T.~Chaffey, A.~Kharitenko, F.~Forni, and R.~Sepulchre, ``A homotopy theorem for incremental stability,'' \emph{arXiv:2412.01580}, 2024.

\bibitem{desoerFeedbackSystemsInputoutput1975}
C.~A. Desoer and M.~Vidyasagar, \emph{Feedback Systems: Input-Output Properties}, ser. Electrical Science Series.\hskip 1em plus 0.5em minus 0.4em\relax New York: Acad. Press, 1975.

\bibitem{vanderschaftL2GainPassivityTechniques2017}
A.~Van Der~Schaft, \emph{L2-{{Gain}} and {{Passivity Techniques}} in {{Nonlinear Control}}}, ser. Communications and {{Control Engineering}}.\hskip 1em plus 0.5em minus 0.4em\relax Cham: Springer International Publishing, 2017.

\bibitem{guiverCircleCriterionClass2022}
C.~Guiver and H.~Logemann, ``The circle criterion for a class of sector-bounded dynamic nonlinearities,'' \emph{Mathematics of Control, Signals, and Systems}, vol.~34, no.~3, pp. 461--492, Sep. 2022.

\end{thebibliography}

\end{document}